\let\c@lofdepth\relax 
\let\c@lotdepth\relax 
\begin{document}



\renewcommand{\baselinestretch}{0.92} 
\newcommand{\clpr}{CLP(${\cal R}$)}
\newcommand{\np}{${\cal NP}$~}
\newcommand{\TR}{${\cal TR}$}
\newcommand{\CTR}{${\cal CTR}$}
\newcommand{\SR}{${\cal SR}$~}
\newcommand{\del}{{\em delta}}
\newcommand{\nil}{~{\rm nil}~}
\newcommand{\I}{{\cal I}~}

\newcommand{\ignore}[1]{}

\newcommand{\stuff}[1]{
        \begin{minipage}{6in}
        {\tt \samepage
        \begin{tabbing}
        \hspace{5mm} \= \hspace{5mm} \= \hspace{5mm} \= \hspace{5mm} \= \hspace{5mm} \= \hspace{5mm} \=
\hspace{5mm} \= \hspace{5mm} \= \hspace{5mm} \= \hspace{5mm} \= \hspace{5mm} \= \hspace{5mm} \= \hspace{5mm} \= \kill
        #1
        \end{tabbing}
       }
        \end{minipage}
}

\newcommand{\mystuff}[1]{
        \begin{minipage}[b]{6in}
        {\tt \samepage
        \begin{tabbing}
        \hspace{5mm} \= \hspace{5mm} \= \hspace{5mm} \= \hspace{5mm} \= \hspace{5mm} \= \hspace{5mm} \=
\hspace{5mm} \= \hspace{5mm} \= \hspace{5mm} \= \hspace{5mm} \= \hspace{5mm} \= \hspace{5mm} \= \hspace{5mm} \= \kill
        #1
        \end{tabbing}
       }
        \end{minipage}
}

\newcommand{\Rule}[2]{\genfrac{}{}{0.5pt}{}
{{\setlength{\fboxrule}{0pt}\setlength{\fboxsep}{3mm}\fbox{$#1$}}}
{{\setlength{\fboxrule}{0pt}\setlength{\fboxsep}{3mm}\fbox{$#2$}}}}

\newcommand{\rct}{\mbox{\it clp}}
\newcommand{\lif}{\ {\tt:\!\!-} \ }
\newcommand{\ctbl}{\,{\mbox{\footnotesize$|$}\tt-\!\!\!\!>}\,}
\newlength{\colwidth}
\setlength{\colwidth}{.47\textwidth}


\newcommand{\eqdef}{\stackrel{\rm def}{=}}
\newcommand{\sys}{\mbox{\bf sys}}
\newcommand{\var}{\mbox{\frenchspacing \it var}}
\newcommand{\lfp}{\mbox{\it \frenchspacing lfp}}
\newcommand{\gfp}{\mbox{\it \frenchspacing gfp}}
\newcommand{\assert}{\mbox{\it \frenchspacing assert}}
\newcommand{\buffer}{\mbox{\it \frenchspacing Buffer}}
\newcommand{\clpif}{\mbox{\tt :-}}


\newtheorem{myproof}{Proof Outline}

\newcommand{\QED}{\nolinebreak\hskip 1em
        \framebox[0.5em]{\rule{0ex}{1.0ex}}}

\newcommand{\close}{{\frenchspacing close }}
\newcommand{\rules}{{\frenchspacing rules }}
\newcommand{\lhs}{{\frenchspacing lhs }}
\newcommand{\rhs}{{\frenchspacing rhs }}
\newcommand{\wrt}{{\frenchspacing wrt. }}
\newcommand{\fold}{\mbox{\it\frenchspacing old}}
\newcommand{\exactunfold}{\mbox{\it\frenchspacing exactunfold}}

\newcommand{\mgu}{{\frenchspacing mgu }}
\newcommand{\size}{{\frenchspacing size }}
\newcommand{\obs}{{\frenchspacing obs }}
\newcommand{\trace}{{\frenchspacing trace }}

\newcommand{\xxx}{\mbox{\Large $\vartriangleright$}}

\newcommand{\undeniable}{\mbox{\Large $\vartriangleright$}\hspace{-8pt}\raisebox{2pt}{\tiny u}~~}
\newcommand{\inevitable}{\mbox{\Large $\vartriangleright$}\hspace{-8pt}\raisebox{2pt}{\tiny i}~~\,}

\newcommand{\reachable}{\mbox{\Large $\vartriangleright$}\hspace{-9pt}\raisebox{1pt}{\tiny *}~~\,}

\newcommand{\lbr}{\mbox{$[\![$}}
\newcommand{\rbr}{\mbox{$]\!]$}}
\newcommand{\ct}[1]{\mbox{\lbr$\vec{#1}$\rbr}}
\newcommand{\cs}[1]{\mbox{\lbr#1\rbr}}
\newcommand{\clp}[1]{{\it \frenchspacing clp}(\mbox{$#1$})}
\newcommand{\pp}[1]{\mbox{{\color{blue}{$\langle$#1$\rangle$}}}}

\newcommand{\ao}{\mbox{${\cal A}$}}
\newcommand{\co}{\mbox{${\cal C}$}}
\newcommand{\po}{\mbox{${\cal P}$}}

\newcommand{\tA}{{\tilde{A}}}
\newcommand{\tB}{{\tilde{B}}}
\newcommand{\tC}{{\tilde{C}}}

\newcommand{\A}{\mbox{$\cal A$}}
\newcommand{\B}{\mbox{$\cal B$}}
\newcommand{\D}{\mbox{$\cal D$}}
\newcommand{\E}{\mbox{$\cal E$}}
\newcommand{\M}{\mbox{$\cal M$}}
\newcommand{\V}{\mbox{$\cal V$}}
\newcommand{\unfold}{\mbox{\sc unfold}}

\newcommand{\G}{\mbox{$\cal G$}}
\newcommand{\Gone}{\mbox{${\cal G}_{\!1}$}}
\newcommand{\Gi}{\mbox{${\cal G}_{\!i}$}}
\newcommand{\Gn}{\mbox{${\cal G}_{\!n}$}}

\newcommand{\GL}{\mbox{${\cal G}_{\!L}$}}
\newcommand{\GR}{\mbox{${\cal G}_{\!R}$}}

\newcommand{\HH}{\mbox{$\cal H$}}
\newcommand{\HHone}{\mbox{${\cal H}_{\!1}$}}
\newcommand{\HHi}{\mbox{${\cal H}_{\!i}$}}
\newcommand{\HHj}{\mbox{${\cal H}_{\!j}$}}
\newcommand{\HHn}{\mbox{${\cal H}_{\!n}$}}
\newcommand{\HHm}{\mbox{${\cal H}_{\!m}$}}

\newcommand{\Mone}{\mbox{${\cal M}_{\!1}$}}
\newcommand{\Mtwo}{\mbox{${\cal M}_{\!2}$}}
\newcommand{\MMi}{\mbox{${\cal M}_{\!i}$}}
\newcommand{\MMn}{\mbox{${\cal M}_{\!n}$}}
\newcommand{\MMnone}{\mbox{${\cal M}_{\!n-1}$}}

\newcommand{\PsiL}{\mbox{$\Psi_{\!L}$}}
\newcommand{\PsiR}{\mbox{$\Psi_{\!R}$}}
\newcommand{\PsiB}{\mbox{$\Psi_{\!B}$}}

\newcommand{\Af}{A_{\!f}}
\newcommand{\Ax}{A_{\!1}}
\newcommand{\Nf}{N_{\!f}}
\newcommand{\Xf}{X_{\!f}}
\newcommand{\Yf}{Y_{\!f}}
\newcommand{\Hxxx}{H_{\!f}}

\newcommand{\cts}{\mbox{$\mapsto$}}
\newcommand{\sepimp}{\mbox{$-\!*$}}

\newcommand{\arr}[3]{\mbox{$\langle \mbox{#1,#2,#3} \rangle$}}
\newcommand{\triple}[3]{\langle{#1},{#2},{#3}\rangle}
\newcommand{\aquadruple}[4]{\langle{#1},{#2},{#3},{#4}\rangle}
\newcommand{\baeq}[2]{\mbox{$=_{\[#1 .. #2\]}$}}

\newcommand{\ite}[3]{\mbox{\textit{ite}}({#1},{#2},{#3})}
\newcommand{\pred}[1]{\mbox{\textit{#1}}}
\newcommand{\ptab}{~~~~}

\newlength{\vitelen}
\settowidth{\vitelen}{\mbox{\textit{ite}}(}
\newcommand{\vite}[3]{\begin{array}[t]{l}\mbox{\textit{ite}}({#1},\\
\hspace{\vitelen}{#2},\\
\hspace{\vitelen}{#3}
    \end{array}}

\newcommand{\Hfx}{H_{\!f}}
\newcommand{\Hx}{H_{\!1}}
\newcommand{\Hxx}{H_{\!2}}
\newcommand{\Pf}{P_{\!f}}
\newcommand{\Pz}{P_{\!0}}
\newcommand{\Px}{P_{\!1}}

\newcommand{\Ix}{I_{\!1}}
\newcommand{\Ixx}{I_{\!2}}
\newcommand{\Jf}{J_{\!f}}
\newcommand{\Jx}{J_{\!1}}
\newcommand{\Jxx}{J_{\!2}}

\newcommand{\witness}{\mbox{$\omega$}}
\newcommand{\transformer}{\mbox{$\Delta$}}

\newcommand{\func}[1]{\mbox{\textsf{#1}}}

\floatstyle{boxed}
\restylefloat{figure}

%

\renewcommand\labelitemi{$\bullet$}
\renewcommand\labelitemii{\normalfont\bfseries --}

\renewcommand\floatpagefraction{.9}
\renewcommand\topfraction{.9}
\renewcommand\bottomfraction{.9}
\renewcommand\textfraction{.1}   
\setcounter{totalnumber}{50}
\setcounter{topnumber}{50}
\setcounter{bottomnumber}{50}

\raggedbottom

\makeatletter
\newcommand{\manuallabel}[2]{\def\@currentlabel{#2}\label{#1}}
\makeatother

\newcounter{chapcount}
\newcommand{\chapcountreset}{\setcounter{chapcount}{0}}
\chapcountreset

\newcounter{excount}
\newcommand{\exreset}{\setcounter{excount}{0}}
\exreset
\newcommand{\newexample}[1]{\addtocounter{excount}{1}
{\vspace{2mm} \noindent \mbox{{\scriptsize EXAMPLE} \examplecount~\emph{#1}:}}}

\newcommand{\examplecount}{\arabic{excount}}

\newcommand{\exlabel}[1]{\manuallabel{#1}{\examplecount}}

\newcommand{\todo}[1]{
\noindent \framebox{\textbf{ #1}}
\newline
}

\newcommand{\algorithmicinput}{\textbf{Input:~}}
\newcommand{\algorithmicoutput}{\textbf{Output:~}}
\newcommand{\algorithmicglobal}{\textbf{Globally:~}}
\newcommand{\algorithmicfunction}{\textbf{Function}\ }
\newcommand{\algorithmicfunctionend}{\textbf{EndFunction}\ }
\newcommand{\memoed}{\mbox{\func{memoed}}}
\newcommand{\memoize}{\mbox{\func{memoize}}}

\newcommand{\memotable}{\mbox{\pred{Table}}}
\newcommand{\wpc}{\mbox{$\func{pre}$}}
\newcommand{\pre}[2]{\mbox{$\func{pre}(#1, #2)$}}

\newcommand{\interp}{\mbox{$\pred{Intp}$}}
\newcommand{\emanate}{\mbox{\func{outgoing}}}
\newcommand{\absiteration}{\mbox{\func{loop\_end}}}
\newcommand{\step}{\mbox{\func{TransStep}}}
\newcommand{\compress}{\mbox{\func{JoinVertical}}}
\newcommand{\join}{\mbox{\func{JoinHorizontal}}}

\newcommand{\program}{\mbox{$\cal P$}}
\newcommand{\N}{\mbox{$\cal N$}}

\newcommand{\Assign}{\mbox{:=}}
\newcommand{\pair}[2]{\langle{#1},{#2}\rangle}
\newcommand{\tuple}[3]{\langle{#1},{#2},{#3}\rangle}
\newcommand{\quadruple}[4]{[{#1},{#2},{#3},{#4}]}
\newcommand{\fivetuple}[5]{[{#1},{#2},{#3},{#4},{#5}]}

\newcommand{\sat}{{\small \textsf{SAT}}}
\newcommand{\unsat}{\textsf{UNSAT}}
\newcommand{\smt}{{\small \textsf{SMT}}}
\newcommand{\por}{{\small \textsf{POR}}}
\newcommand{\dpor}{{\small \textsf{DPOR}}} 
\newcommand{\pdpor}{{\small \textsf{PDPOR}}} 
\newcommand{\si}{{\small \textsf{SI}}}
\newcommand{\cegar}{{\small \textsf{CEGAR}}}
\newcommand{\al}{{\small \textsf{AL}}}
\newcommand{\rcsp}{{\small \textsf{RCSP}}}
\newcommand{\wcet}{{\small \textsf{WCET}}}
\newcommand{\ilp}{{\small \textsf{ILP}}}
\newcommand{\ipet}{{\small \textsf{IPET}}}
\newcommand{\cfg}{{\small \textsf{CFG}}}
\newcommand{\saturn}{{\small \textsf{SATURN}}}
\renewcommand{\dag}{{\small \textsf{DAG}}}

\newcounter{pppcount}
\newcommand{\pppreset}{\setcounter{pppcount}{0}}
\newcommand{\ppp}{\refstepcounter{pppcount}\mbox{{\color{blue}$\langle\arabic{pppcount}\rangle$}}}

\renewcommand{\note}[1]{\marginpar{\color{red}{#1}}}

\newcommand{\resource}{\mbox{\textsf r}}
\newcommand{\timing}{\mbox{\textsf t}}

\newcommand{\summarize}{\mbox{\cal S}}

\newcommand{\For}{\mbox{\textsf{for}}}
\newcommand{\Else}{\mbox{\textsf{else}}}
\newcommand{\If}{\mbox{\textsf{if}}}

\newcommand{\myiff}{\mbox{\texttt{iff}}}

\newcommand{\void}{\mbox{\textsf{void}}}
\newcommand{\assume}[1]{\mbox{\textsf{assume(#1)}}}
\newcommand{\assign}[2]{\mbox{\textsf{#1 := #2}}}

\newcommand{\exec}{\mbox{\textsf{exec}}}

\newcommand{\transition}[3]{#1~\xlongrightarrow[]{#3}~#2}
\newcommand{\shorttransition}[2]{#1~\xrightarrow[]{}~#2}
\newcommand{\trans}{\longrightarrow}
\newcommand{\shorttrans}{\rightarrow}
\newcommand{\translabel}[1]{\xlongrightarrow[]{#1}}
\newcommand{\loc}{\mbox{$\ell$}}
\newcommand{\locations}{\mbox{${\cal L}$}}

\newcommand{\transsystem}{\mbox{$\mathcal{P}$}}
\newcommand{\newtranssystem}{\mbox{$\mathcal{G}$}}

\newcommand{\symstate}{\mbox{$s$}}
\newcommand{\pci}[1]{\mbox{$\loc_{#1}$}}
\newcommand{\next}{\mbox{$\rightarrow$}}
\newcommand{\pc}{\mbox{\loc}}
\newcommand{\pcend}{\mbox{$\loc_{\textsf{end}}$}}
\newcommand{\pcerror}{\mbox{$\loc_{\textsf{error}}$}}
\newcommand{\pcstart}{\mbox{$\loc_{\textsf{start}}$}}
\newcommand{\pathcond}{\mbox{$\Pi$}}
\newcommand{\store}{\mbox{$\sigma$}}
\newcommand{\pathcondbar}{\mbox{$\overline{\Pi}$}}
\newcommand{\storebar}{\mbox{$\overline{h}$}}
\newcommand{\symstatebar}{\mbox{$\overline{\symstate}$}}
\newcommand{\mapstatetoformula}[1]{\mbox{$\llbracket {#1} \rrbracket$}}

\renewcommand{\path}{\mbox{$\theta$}}

\newcommand{\typevar}{\mbox{\emph{Vars}}}
\newcommand{\typesymvar}{\mbox{\emph{SymVars}}}
\newcommand{\typeop}{\mbox{\emph{Ops}}}
\newcommand{\typefo}{\mbox{\emph{FO}}}
\newcommand{\typeterms}{\mbox{\emph{Terms}}}
\newcommand{\typestate}{\mbox{\emph{States}}}
\newcommand{\typesymbstate}{\mbox{\emph{SymStates}}}
\newcommand{\typesympath}{\mbox{\emph{SymPaths}}}
\newcommand{\true}{\mbox{\frenchspacing \it true}}
\newcommand{\false}{\mbox{\frenchspacing \it false}}
\newcommand{\typebool}{\mbox{\emph{Bool}}}
\newcommand{\typeint}{\mbox{\emph{Int}}}
\newcommand{\typenat}{\mbox{\emph{Nat}}}
\newcommand{\typekeys}{\mbox{$\mathcal{K}$}}
\newcommand{\typevoid}{\mbox{\emph{Void}}}

\newcommand{\eval}[2]{\llbracket {#1} \rrbracket_{#2}}
\newcommand{\define}{\mbox{~$\triangleq$~}}
\newcommand{\unknown}{\mbox{\textsf{$\cdot$}}}

\newcommand{\Intpsymbol}{\mbox{$\overline{\Psi}$}}
\newcommand{\InvariantFunc}{\mbox{\textsf{invariant}}}
\newcommand{\InvariantSym}{\mbox{$\mathcal{I}$}}
\newcommand{\ConflictSym}{\mbox{$\mathcal{C}$}}
\newcommand{\ContextSym}{\mbox{$\mathcal{O}$}}
\newcommand{\modifies}{\mbox{\textsf{\textsc{Modifies}}}}
\newcommand{\havoc}{\mbox{\textsf{\textsc{Havoc}}}}
\newcommand{\getvars}{\mbox{\textsf{var}}}

\newcommand{\state}[1]{\mbox{{\small \textsf{#1}}}}
\newcommand{\safety}{\mbox{$\psi$}}
\newcommand{\cons}{\mbox{$\phi$}}
\newcommand{\target}{\mbox{$\gamma$}}
\renewcommand{\solution}{\mbox{$\gamma$}}
\newcommand{\solutions}{\mbox{$\Gamma$}}
\newcommand{\id}[2]{\mbox{\textsf{Id}($#1, #2$)}}

\renewcommand{\If}{\mbox{\textbf{if}}}
\newcommand{\Endif}{\mbox{\textbf{endif}}}
\newcommand{\Return}{\mbox{\textbf{return}}}
\newcommand{\Then}{\mbox{\textbf{then}}}
\renewcommand{\Else}{\mbox{\textbf{else}}}
\newcommand{\Foreach}{\mbox{\textbf{foreach}}}
\newcommand{\While}{\mbox{\textbf{while}}}
\newcommand{\Do}{\mbox{\textbf{do}}}
\newcommand{\Endfor}{\mbox{\textbf{endfor}}}
\newcommand{\Endwhile}{\mbox{\textbf{endwhile}}}

\floatsep 2mm plus 1mm minus 0mm

%
\textfloatsep 2mm plus 1mm minus 0mm

\setlength{\intextsep}{2pt plus 2pt minus 0pt}
\belowcaptionskip 2pt plus 0pt minus 2pt
\abovecaptionskip 2pt plus 0pt minus 2pt


\title{A Framework to Synergize Partial Order Reduction with State Interpolation}

\author{Duc-Hiep Chu \and Joxan Jaffar}

\institute{National University of Singapore\\
\email{hiepcd,joxan@comp.nus.edu.sg}}

\maketitle

\begin{abstract}

We address the problem of reasoning about interleavings 
in safety verification of concurrent programs.
In the literature, there are two prominent techniques for pruning the search space.  
First, there are well-investigated \emph{trace-based} methods, 
collectively known as ``Partial Order Reduction (\por{})'', 
which operate by weakening the concept of a trace by abstracting the total 
order of its transitions into a partial order. 
Second, there is \emph{state-based} interpolation where a collection of
formulas can be generalized by taking into account the property to be verified.
Our main contribution is a framework
that \emph{synergistically} combines \por{}
with state interpolation so that the sum is more than its parts.

\end{abstract}

\section{Introduction}
\label{sec:por:intro}
\noindent 
We consider the \emph{state explosion problem} 
in safety verification of concurrent programs.
This is caused by the interleavings of transitions from different processes.  
In explicit-state model checking, a general approach to counter this explosion 
is Partial Order Reduction (\por)~(e.g., \cite{valmari89petri,peled93cav,godefroid96}).
This exploits the equivalence of interleavings of 
``independent'' transitions: two transitions are independent 
if their consecutive occurrences in a trace 
can be swapped without changing the final state.
In other words, \por-related methods prune away 
\emph{redundant} process interleavings in a sense
that, for each Mazurkiewicz~\cite{mazurkiewicz86}\footnote{We
remark that the concept of \por{} goes beyond the preservation of Mazurkiewicz traces, 
e.g.~\cite{valmari89petri}.
However, from a practical perspective,
it is safe to consider such form of pruning as a representative example of \por{}.}
trace equivalence class of
interleavings, if a representative has been checked, the remaining
ones are regarded as redundant.

On the other hand, \emph{symbolic execution} 
\cite{king76acm} is another method for program reasoning
which recently has made increasing impact
on software engineering research \cite{cadar11icse}.
The main challenge for symbolic execution 
is the exponential number of symbolic paths. 
The works \cite{mcmillan10cav,jaffar11rv} 
tackle successfully this fundamental problem by 
eliminating from the concrete model, on-the-fly, those facts which are \emph{irrelevant} or 
\emph{too-specific} for proving the unreachability of the error nodes.
 This learning phase consists of computing 
\emph{state-based interpolants} in a similar spirit 
to that of conflict clause learning in \sat~solvers. 

Now symbolic execution with state interpolation (\si{}) 
has been shown to be effective for verifying sequential programs.
In \si~\cite{mcmillan10cav,jaffar11rv},
a node at program point $\pc$ in the reachability tree can be pruned,
if its context is subsumed by the interpolant computed earlier for the
same program point $\pc$.  Therefore, even in the best case scenario,
the number of states explored by an \si~method
must still be at least the number of all \emph{distinct} program
points\footnote{Whereas \por-related methods do not suffer from this. Here we assume
that the input concurrent program has already been preprocessed
(e.g., by static slicing to remove irrelevant transitions, or by static block encodings) to reduce
the size of the transition system for each process.}.
However, in the setting of concurrent programs, exploring each distinct
global program point\footnote{The number of global points 
is the product of the numbers of local program points in all processes.} 
once might already be considered prohibitive.  
In short, symbolic execution with \si~\emph{alone} 
is not efficient enough for the verification of concurrent programs.

Recent work (e.g., \cite{StatefulSPIN08}) has
shown the usefulness of going \emph{stateful} in implementing 
a \por~method.  It directly follows that \si~can help to
yield even better performance.  In order to implement an efficient
stateful algorithm, we are required to come up with an abstraction
for each (concrete or symbolic) state.  Unsurprisingly,
\si~often offers us good
abstractions.

The above suggests that
\por~and \si~can be very much \emph{complementary} to each other.
In this paper, we propose a general framework
employing \emph{symbolic execution} in the 
exploration of the state space, 
while both \por~and \si~are exploited for pruning. 
\si{} and \por{} are combined synergistically as the concept of interpolation. 
Interpolation is essentially a form of learning where the 
completed search of a \emph{safe} subtree is then formulated as a
recipe, ideally a \emph{succinct} formula, 
for future pruning.  The key distinction of our interpolation framework is that each recipe
discovered by a node is \emph{forced} to be conveyed back to its ancestors, which
gives rise to pruning of larger subtrees.

In summary, we address the challenge:
``combining classic \por~methods with symbolic technique 
has proven to be difficult''~\cite{nec09cav},
especially in the context of \emph{software verification}.
More specifically, we propose an algorithm schema to combine 
\emph{synergistically} \por{} with state interpolation so that the sum is more than its
parts. However, we first need to formalize \por~wrt. a 
symbolic search framework with abstraction 
in such a way that: (1) \por~can be \emph{property driven} and (2) 
\por{}, or more precisely, the concept of persistent set,  
can be applicable for a set of states (rather than an individual state). 
While the main contribution is a theoretical framework (for space reason, 
we present the proofs of the two theorems in the Appendix),
we also indicate a potential for the development of advanced implementations.






\section{Related Work}
\label{sec:por:related}

Partial Order Reduction (\por) is a well-investigated technique
in model checking of concurrent systems.
Some notable early works are \cite{valmari89petri,peled93cav,godefroid96}.
Later refinements of \por{}, Dynamic~\cite{flanagan05popl} and
Cartesian~\cite{cartesian} \por~({\small \textsf{DPOR}} and {\small \textsf{CPOR}} respectively) 
improve traditional \por~techniques by
detecting collisions on-the-fly.  
Recently, \cite{abdulla14popl} has proposed the novel of concept of \emph{source sets},
optimizing the implementation for \dpor{}.
These methods, in general, often achieve better reduction than traditional techniques,
due to the more accurate detection of independent transitions.

Traditional \por~techniques \cite{valmari89petri,peled93cav,godefroid96} 
distinguish between liveness and safety properties.
\por~ has also been extended for symbolic model checking \cite{alur97cav}: 
a symbolic state can represent a number of concrete states.
These methods, however, are not applicable to safety verification 
of modern concurrent programs (written in mainstream APIs such as POSIX).
One important weakness of traditional \por~is that it is \emph{not sensitive}
wrt. different target safety properties. 
%
In contrast, recent works have shown that
property-aware reduction can be achieved by symbolic
methods using a general-purpose \sat/\smt{} solver \cite{nec08tacas,nec09cav,nec09fse,ICSE11}. 
Verification is often encoded as a formula which is \emph{satisfiable}
\myiff{} there exists an interleaving execution of the programs that
violates the property. Reductions happen inside the \sat ~solver through
the addition of learned clauses derived by conflict
analysis~\cite{grasp}.  This type of reduction
is somewhat similar to what we call \emph{state interpolation}. 
%


An important related work is \cite{nec09cav}, which is the first
to consider enhancing \por{}  with property driven pruning, via the use of 
an \smt{} solver.
Subsequently, there was a follow-up work \cite{nec09fse}.
In \cite{nec09cav}, they began with an \smt{} encoding of the underlying transition system,
and then they enhance this encoding with a concept of ``monotonicity''.
The effect of this is that traces can be grouped into equivalence classes,
and in each class, all traces which are \emph{not monotonic} will be considered
as \emph{unsatisfiable} by the \smt~solver.
The idea of course is that such traces are in fact redundant.
This work has demonstrated some promising results as most concurrency bugs in real
applications have been found to be \emph{shallow}. We note that
\cite{nec09cav} incidentally enjoyed some (weak) form of \si{} pruning, due to the similarity
between conflict clause learning and state interpolation. However, there the synergy 
between \por{} and \smt{}
is \emph{unclear}. We later demonstrate in Sec.~\ref{sec:por:experiment} that
such synergy in \cite{nec09cav} is indeed relatively poor.

There is a fundamental problem with scalability in \cite{nec09cav}, as
mentioned in the follow-up work \cite{nec09fse}:
``It will not scale to the entire concurrent program'' if we encode the 
whole search space as a single formula and submit it to an \smt~solver.

Let us first compare \cite{nec09cav} with our work.
Essentially, the difference is twofold.
First, in this paper, the theory for partial order reduction is \emph{property driven}.
In contrast, the monotonicity reduction of \cite{nec09cav} is not. In other words,
though property driven pruning is observed in \cite{nec09cav}, it is contributed mainly
by the conflict clauses learned, not from the monotonicity relation.
We specifically exemplify the power of property driven \por{} in the later sections.
Second, the encoding in \cite{nec09cav} is processed by a \emph{black-box}
\smt{} solver.  Thus important algorithmic refinements 
are not possible. Some examples:

\vspace{2mm}
\noindent $\bullet$
There are different options in implementing \si{}.
Specifically in this paper, we employ ``precondition'' computations.
Using a black-box solver, one has to rely on its 
fixed interpolation methods.


\vspace{2mm}
\noindent $\bullet$
Our approach is \emph{lazy} in a sense that our solver
is only required to consider \emph{one} symbolic path at a time;
in \cite{nec09cav} it is not the case. This matters most
when the program is unsafe and finding
counter-examples is relatively easy (there are many traces
which violate the safety property).

\vspace{2mm}
\noindent $\bullet$
In having a symbolic execution framework, one can direct the search process.
This is useful since the order in which state interpolants are generated 
does give rise to different reductions.
Of course, such manipulation of the search process is 
hard, if not impossible, when using a black-box solver.

\vspace{2mm}
In order to remedy the scalability issue of \cite{nec09cav},
the work \cite{nec09fse} adapted it to the setting of program testing.
In particular, \cite{nec09fse} proposed a concurrent trace program (CTP)
framework which employs both concrete execution and symbolic solving
to strike a balance between efficiency and scalability of an \smt-based
method.  However, when the input program is \emph{safe}, i.e., absence of bugs,
\cite{nec09fse} in general suffers from the same scalability issue as in \cite{nec09cav}.

We remark that, the new direction of \cite{nec09fse}, in avoiding the blow-up of the \smt{} solver, 
was in fact preceded by the work on under-approximation
widening (UW)~\cite{grumberg05popl}.
As with CTP, UW models a subset,
which will be incrementally enlarged, of all the possible
interleavings as an \smt~formula and submits it to an \smt~solver. 
In UW the scheduling decisions are also encoded as constraints, so that the
\emph{unsatisfiable core} returned by the solver can then be used to
further the search in probably a useful direction.  
This is the major contribution of UW.
However, an important point is that this furthering of the search
is a \emph{repeated} call to the solver, this time with a weaker formula;
which means that the problem at hand is now larger, having more
traces to consider.
On this repeated call, the work done for the original call is
thus \emph{duplicated}.


At first glance, it seems attractive and simple to encode the problem
compactly as a set of constraints and delegate the search process to a
general-purpose \smt{} solver.  However, there are some fundamental
disadvantages, and these arise mainly because it is hard to exploit
the semantics of the program to direct the search inside the solver.
This is in fact evidenced in the works mentioned above.

We believe, however, the foremost disadvantage of using a general-purpose solver
lies in the encoding of process interleavings.
For instance, even when a concurrent program 
has only \emph{one} feasible execution trace, the encoding formula
being fed to the solver is still of enormous size and
can easily choke up the solver. More
importantly, different from safety verification of sequential programs,
the encoding of interleavings (e.g., \cite{nec09cav} uses
the variable \emph{sel} to model which process is selected for executing) 
often hampers the normal derivations of succinct conflict clauses
by means of resolution in modern \smt{} solvers.
We empirically demonstrate the inefficiency 
of such approach in Sec.~\ref{sec:por:experiment}.



Another important related work is \cite{wachter13fmcad},
developed independently\footnote{Our work has been publicly available since 2012 in
forms of a draft paper and a Ph.D. thesis. For anonymity reason, we omit the details here.} by Wachter et al.
\cite{wachter13fmcad} follows a similar direction as in this paper: combining \por{} with 
a standard state interpolation algorithm, which is often referred to as the {\small {\textsf IMPACT}} 
algorithm~\cite{mcmillan10cav}. 
Nevertheless, it is important to note that the framework presented in this paper subsumes
\cite{wachter13fmcad}. While this paper proposes the novel concept of
Property Driven \por{} before combining it with the state interpolation algorithm, 
\cite{wachter13fmcad} exploits directly the concept of ``monotonicity'' as in \cite{nec09cav},
thus their \por{} part does not give rise to property driven pruning.

%


\section{Background}
\label{sec:por:prelim}
We consider a concurrent system composed of a finite number of threads or processes
performing atomic operations on shared variables. 
%
Let $P_i$ ($1 \leq i \leq n$) be a process with the set
$trans_i$ of transitions.
For simplicity, assume that $trans_i$ contains no cycles.

We also assume all processes have disjoint sets of transitions. 
Let $\mathcal{T} = \cup_{i=1}^{n} trans_i$ be the set of all transitions.
Let $V_i$ be the set of local variables 
of process $P_i$, and $V_{shared}$ the set of shared variables 
of the given concurrent program.
Let $pc_i \in V_i$ be a special variable representing the process program counter, 
and the tuple $\langle pc_1, pc_2 \cdots, pc_n \rangle$ 
represent the global program point.
Let $\typesymbstate$ be the set of all global symbolic states
of the given program where $s_0 \in \typesymbstate$ is the initial state. 
A state $s \in \typesymbstate$ comprises two parts: 
its {\em global program point} $\pc$, also denoted by $\texttt{pc}(s)$,
which is a tuple of local program counters, 
and its {\em symbolic constraints} $\mapstatetoformula{\symstate}$
over the program variables. In other words, we denote a state $s$ by
$\pair{\texttt{pc}(s)}{\mapstatetoformula{\symstate}}$.

We consider the {\em transitions} of states
induced by the program. 
Following~\cite{godefroid96}, we only pay attention to \emph{visible} transitions.
A (visible) transition $t^{\{i\}}$ pertains to some process $P_i$.
It transfers process $P_i$ from control location $\pc_{1}$ to $\pc_{2}$.
In general, the application of $t^{\{i\}}$ is guarded by some 
condition \texttt{cond} (\texttt{cond} might be just \texttt{true}).
At some state $s \in \typesymbstate$, 
when the $i^{th}$ component  of $\texttt{pc}(s)$, namely $\texttt{pc}(s)[i]$,
equals $\pc_{1}$, we say that $t^{\{i\}}$ is 
\emph{schedulable}\footnote{This concept is not standard in traditional \por{},
we need it here since we are dealing with symbolic search.}
at $s$.
And when $\symstate$
 satisfies the guard \texttt{cond}, denoted by
$\symstate \models \texttt{cond}$, 
we say that $t^{\{i\}}$ is \emph{enabled} at $s$.
For each state $s$, let $Schedulable(s)$ and $Enabled(s)$ denote the set of transitions 
which respectively are schedulable at $s$ and enabled at $s$.
A state $s$, where $Schedulable(s) = \emptyset$, is called a \emph{terminal state}.

Let $s \stackrel{t}{\rightarrow} s'$ denote transition step 
from $s$ to $s'$ via transition $t$. This step is possible only if
$t$ is \emph{schedulable} at $s$.
We assume that
the effect of applying an enabled transition $t$ on a state $s$ to 
arrive at state $s'$ is well-understood.
In our symbolic execution framework,
executing a schedulable but not enabled 
transition results in an \emph{infeasible} state.
A state $s$ is called {\em infeasible} if $\mapstatetoformula{\symstate}$ is unsatisfiable.
For technical reasons needed below, we shall allow schedulable transitions emanating
from an infeasible state; it follows that the destination state must also be \emph{infeasible}.

\ignore{
\newexample{}
Consider two processes $P_1, P_2$: 
$P_1$ simply awaits for $x=0$, while $P_2$ increments $x$.
So each has one transition to transfer (locally) 
from control location \pp{0} to \pp{1}.
Assume that initially $x=0$, i.e., the initial state $s_0$ is 
$\pair{\pp{0,0}}{x=0}$.
Running $P_2$ first we have the transition from state
$\pair{\pp{0,0}}{x=0}$
to
state $\pair{\pp{0,1}}{x=1}$. 
From here, we note that the transition from $P_1$ is now not enabled
even though it is schedulable. 
If applied, it produces an infeasible (and terminal) state 
$\pair{\pp{1,1}}{x=1 \wedge 1=0}$.
Note that $(x=1 \wedge 1=0) \equiv false$.

On the other hand, running $P_1$ first, we have the transition
from 
$\pair{\pp{0,0}}{x=0}$
to $\pair{\pp{1,0}}{x=0}$.
We may now have a subsequent transition step to
$\pair{\pp{1,1}}{x=1}$,
which is a feasible terminal state. 
}

For a sequence of transitions $w$ (i.e., $w \in \mathcal{T}^*$), 
$Rng(w)$ denotes the set of transitions that appear in $w$. 
Also let $\mathcal{T}_{\pc}$ denote 
the set of all transitions which are schedulable somewhere 
after global program point $\pc$. We note here that the 
schedulability of a transition at some state $s$ only depends 
on the program point component of $s$, namely $\texttt{pc}(s)$.
It does not depend on the constraint component of s, namely $\mapstatetoformula{\symstate}$.
Given $t_1, t_2 \in \mathcal{T}$ we say $t_1$ can \emph{de-schedule} $t_2$ \myiff{}
there exists a state $\symstate$ such that both $t_1, t_2$ are schedulable at $\symstate$ but $t_2$ is 
not schedulable after the execution of $t_1$ from $\symstate$.

Following the above,
$s_1 \stackrel{t_1 \cdots t_m}{\Longrightarrow} s_{m+1}$ 
denotes a sequence of state transitions,
and we say that $s_{m+1}$ is reachable from $s_1$.
We call 
$s_1 \stackrel{t_1}{\rightarrow} s_2  \stackrel{t_2}{\rightarrow} \cdots  \stackrel{t_m}{\rightarrow} s_{m+1}$
a \emph{feasible} derivation from state $s_1$, 
\myiff{} $\forall~1 \leq i \leq m \bullet t_i$ is enabled at $s_i$.
As mentioned earlier, an \emph{infeasible} 
derivation results in an \emph{infeasible state}
(an infeasible state is still aware of its global program point). 
An infeasible state satisfies any safety property.

We define a \emph{complete execution} trace, or simply trace, 
$\rho$ as a sequence of transitions such that 
it is a derivation from $s_0$ and $s_0 \stackrel{\rho}{\implies} s_{f}$ and
$s_{f}$ is a terminal state. 
A trace is infeasible if it is an infeasible derivation from $s_0$.
If a trace is infeasible, then at some point, 
it takes a transition which is schedulable but is not enabled. 
From thereon, the subsequent states are infeasible states.

%

We say a given concurrent program is \emph{safe} 
wrt. a safety property $\safety$ if
$\forall s \in \typesymbstate ~\bullet$ if \emph{s is reachable from the initial state} $s_0$ 
then $s \models \safety$.
A trace $\rho$ is \emph{safe} wrt. $\safety$, 
denoted as $\rho \models \safety$, 
if all its states satisfy $\safety$. 

\vspace*{-2mm}
\subsection*{Partial Order Reduction (\por{}) vs. State-based Interpolation (\si{})}

We assume the readers are familiar with the traditional concept of \por{}.
Regarding state-based interpolation,  we follow the approach of~\cite{jaffar09cp,mcmillan10cav}.
Here our symbolic execution is depicted as a tree rooted at the initial
state $s_0$ and for each state $s_i$ therein,
the descendants are just the states obtainable by extending $s_i$ with a
feasible transition.

\ignore{
Consider one particular feasible path:
$s_0 \stackrel{t_1}{\rightarrow} s_1
\stackrel{t_2}{\rightarrow} s_2 \cdots s_m$.  A \emph{program point}
$\loc_i$ of $\symstate_i$
characterizes a point in the reachability tree in terms of all the
remaining possible transitions.  Now, this particular path is
\emph{safe} wrt. a safety property $\psi$ if for all $k$, $0 \leq k \leq m$, 
we have $s_k \models \psi$.  
A (state) interpolant at program point $\loc_i$, $0 \leq i \leq m$ is simply
a set of states $S_i$ containing $s_i$ such that for any state
$s'_i \in S_i$, $s'_i \stackrel{t_{i+1}}{\longrightarrow} s'_{i+1}
\stackrel{t_{i+2}}{\longrightarrow} s'_{i+2} \cdots s'_m$, it is
also the case that for all $k$, $i \leq k \leq m$, we have 
$s'_k \models \psi$.  This interpolant was
constructed at program point $\loc_i$ due to the one path.  Consider now all
paths from $s_0$ and with prefix $t_1, \cdots , t_{i-1}$.
Compute each of their interpolants.  
Finally, the interpolant for the subtree (at $s_i$) of
paths just considered is simply the intersection of all the
individual interpolants.  This notion of interpolant for a subtree
provides a weaker notion of \emph{subsumption}.
We can now prune a subtree in case its root is within the interpolant computed for a
previously encountered subtree of the same program point. 
}

\begin{definition}[Safe Root]
Given a transition system and an initial state $\symstate_0$,
let $\symstate$ be a feasible state reachable from $\symstate_0$. We say
$\symstate$ is a \emph{safe root} wrt. a safety property $\safety$, 
denoted $\bigtriangleup_{\safety}(\symstate)$, 
\emph{\myiff} all states $\symstate'$ reachable from $\symstate$ are safe wrt. $\safety$. 
\end{definition}

\begin{definition}[State Coverage]\label{def:background:state_pruning}
Given a transition system and an initial state $\symstate_0$ and
$\symstate_i$ and $\symstate_j$ are two symbolic states such that
(1) $\symstate_i$ and $\symstate_j$ are reachable from $\symstate_0$ and
(2) $\symstate_i$ and $\symstate_j$ share the same program point $\pc$,
we say $\symstate_i$ \emph{covers} $\symstate_j$ wrt. a safety property $
\safety$, denoted by $\symstate_i \succeq_{\safety} \symstate_j$,
\emph{\myiff}~$\bigtriangleup_{\safety}(\symstate_i)$ implies $\bigtriangleup_{\safety}(\symstate_j)$.
\end{definition}

The impact of state coverage relation is that if 
(1) $\symstate_i$ covers $\symstate_j$, and 
(2) the subtree rooted at $\symstate_i$ has
been traversed and proved to be safe, then the traversal 
of subtree rooted at $\symstate_j$ can be avoided.
In other words, we gain performance by \emph{pruning} the subtree at $\symstate_j$.
Obviously, if $\symstate_i$ naturally subsumes $\symstate_j$, i.e.,
$\symstate_j \models
\symstate_i$, then state coverage is
trivially achieved. In practice, however, 
this scenario does not happen often enough. 

\ignore{
Let us now introduce the concept of Craig 
interpolant~\cite{craig55interpolant}.

\begin{definition}[Interpolant] \label{def:interpolant}
  Given two first-order logic formulas $F$ and $G$ such that $F \models G$, then there
  exists  an {\em interpolant $H$} denoted as $\emph{\interp}(F,G)$,
  which is a first-order logic formula such that $F \models H$ and $H \models G$, 
  and each variable of $H$ is a variable of both $F$ and $G$.
\end{definition}
}

\begin{definition}[Sound Interpolant]\label{def:interpolant:sound}
Given a transition system and an initial state $\symstate_0$,
given a safety property $\safety$ and program point $\pc$, 
we say a formula $\Intpsymbol$ is 
a \emph{sound interpolant} for $\pc$, denoted by \emph{$\si(\pc,\safety)$},
if for all states $\symstate \equiv \pair{\pc}{\unknown}$ 
reachable from $\symstate_0$,  
$\symstate \models \Intpsymbol$
implies that $\symstate$ is a safe root.
\end{definition}

What we want now is to generate a formula $\Intpsymbol$ 
(called \emph{interpolant}), which
still preserves the safety of all states reachable from $\symstate_i$, 
but is weaker (more general)  than the original formula associated to the state
$\symstate_i$.
In other words, we should have $\symstate_i 
\models \si{}(\pc,\safety)$. 
We assume that this condition is always ensured by any implementation 
of state-based interpolation. The main purpose of using
$\Intpsymbol$ rather than the original formula associated to the
symbolic state $\symstate_i$ is to increase the
likelihood of subsumption. That is,
the likelihood of having $\symstate_j \models \Intpsymbol$
is expected to be much higher than the likelihood of
having $\symstate_j \models \symstate_i$.

In fact, the perfect interpolant should be the weakest precondition \cite{dijkstrawp}
computed for program point $\pc$ wrt. the transition system 
and the safety property $\safety$. 
We denote this weakest precondition
as $\func{wp}(\pc,\safety)$.
Any subsequent state 
$\symstate_j~\equiv~\pair{\pc}{\unknown}$
which has $\symstate_j$ 
stronger than this weakest precondition can be pruned.
However, in general, the weakest precondition is too computationally demanding.
An interpolant for the state $\symstate_i$ is indeed a formula which 
approximates the weakest precondition
at program point $\pc$ wrt. the transition system,
i.e., $\Intpsymbol\equiv\si{}(\pc,\safety)\equiv\interp(\symstate_i,\func{wp}(\pc,\safety))$.
A \emph{good} interpolant is one which closely approximates the weakest precondition
and can be computed efficiently.

The symbolic execution of a program can be augmented by annotating
each program point with its corresponding interpolants such that the
interpolants represent the sufficient conditions to preserve the
unreachability of any unsafe state. Then, the \emph{basic} 
notion of pruning with state interpolant can be defined as follows.

\begin{definition}[Pruning with Interpolant]
\label{def:subsumption}
  Given a symbolic state $\symstate \equiv \pair{\pc}{\unknown}$  such that
  $\pc$ is annotated with some interpolant $\Intpsymbol$, we say
  that $\symstate$ is \emph{pruned} by the interpolant $\Intpsymbol$ if
  $\symstate$ implies $\Intpsymbol$ (i.e.,
  $\symstate \models \Intpsymbol$).
\end{definition}


Now let us discuss the the effectiveness of \por{} and \si{} in pruning
the search space with an example.
For simplicity, we purposely make the example \emph{concrete}, i.e., states
are indeed concrete states.

\newexample{(Closely coupled processes)}\exlabel{por:ex:close}
See Fig.~\ref{por:fig:state_close}.
Program points are shown in angle brackets.
Fig.~\ref{por:subfig1:state_close} shows the 
control flow graphs of two processes.
Process 1 increments $x$ twice whereas 
process 2 doubles $x$ twice. 
The transitions associated with such actions  
and the safety property are depicted in the figure.
\por{} requires a full search tree while
Fig.~\ref{por:subfig2:state_close} shows
the search space explored by \si{}.
Interpolants are in curly brackets.
Bold circles denote pruned/subsumed states.

\begin{figure}[tbh]
\subfigure[Two Closely Coupled Processes]{
 \includegraphics[width=.40\textwidth]{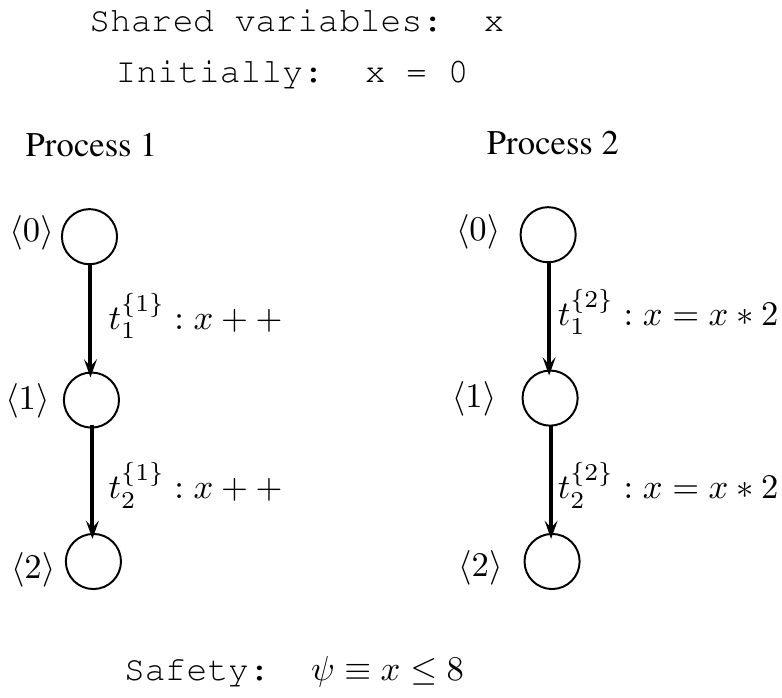}
\label{por:subfig1:state_close}
 }
 \subfigure[Search Space by \si{}] {
\includegraphics[width=.55\textwidth]{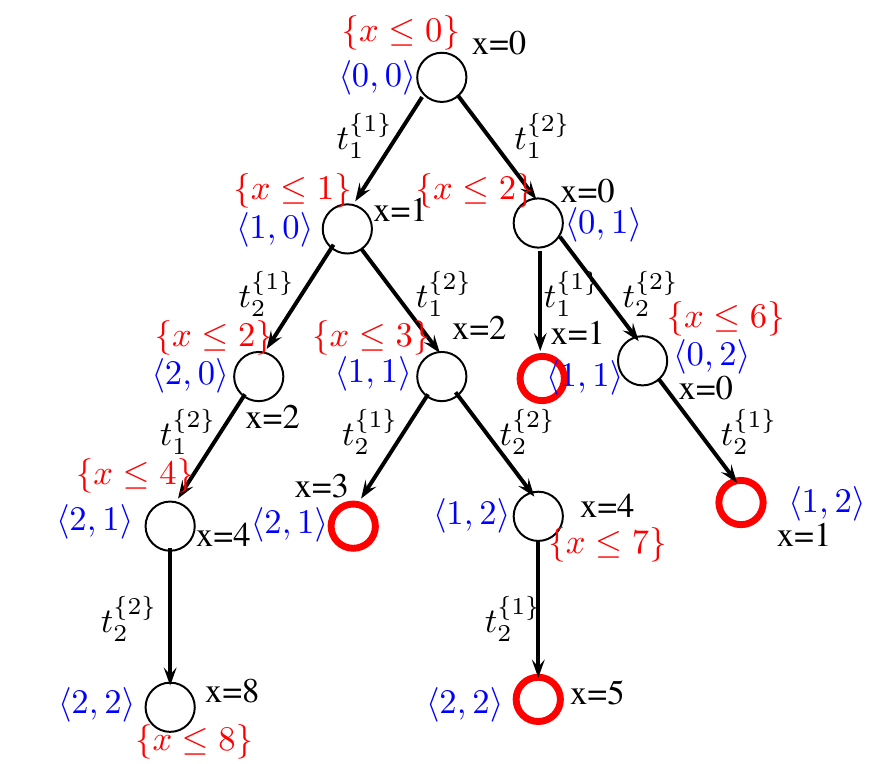}
\label{por:subfig2:state_close}
}
\vspace{-2mm}
\caption{Application of \si{} on 2 Closely Coupled Processes}
\label{por:fig:state_close}
\end{figure}

\noindent
Let us first attempt this example using \por.
It is clear that 
$t^{\{1\}}_{1}$ is \emph{dependent} with both $t^{\{2\}}_{1}$ and $t^{\{2\}}_{2}$. Also
$t^{\{1\}}_{2}$ is dependent with both $t^{\{2\}}_{1}$ and $t^{\{2\}}_{2}$.
Indeed, each of all the 6 execution traces in the search tree 
ends at a different concrete state.
As classic \por{} methods use the concept of \emph{trace equivalence} for pruning,
no interleaving is avoided:
those methods will enumerate the full search tree of 19 states (for space reasons, we omit it here).

Revisit the example using \si{}, where we use
the \emph{weakest preconditions} \cite{dijkstrawp}
as the state interpolants: the interpolant for a 
state is computed as the weakest precondition
to ensure that the state itself as well as all of its descendants are safe
(see Fig.~\ref{por:subfig2:state_close}).
We in fact achieve the best case scenario with it: whenever we come 
to a program point which has been examined before, subsumption happens.
The number of non-subsumed states is still of order $O(k^2)$ 
(where $k = 3$ in this particular example), assuming that we
generalize the number of local program points for each process to $O(k)$.
Fig.~\ref{por:subfig2:state_close} shows 9 non-subsumed states and 4 subsumed states.

\ignore{
Note the similarity between the tree in 
Fig.~\ref{por:subfig2:state_close}
and the tree in Fig.~\ref{por:subfig3:por_loose},
even though Ex.~\ref{por:ex:close}
contains more interfering transitions, 
comparing to Ex.~\ref{por:ex:loose}.
}

In summary, the above example shows that
\si{} might outperform \por{} when the component processes are
closely coupled. However, one can easily devise an example where
the component processes do not interfere with each other at all. Under
such condition \por{} will require only one trace to prove safety, while \si{}
is still (lower) bounded by the total number of global program points.
In this paper, we contribute by proposing a framework to combine 
\por{} and \si{} \emph{synergistically}.




\section{Property Driven POR (PDPOR)}
\label{sec:targetPOR}
``Combining classic \por~methods with symbolic
algorithms has been proven to be difficult''~\cite{nec09cav}. 
One fundamental reason is that the concepts of (Mazurkiewicz) 
equivalence and transition independence, 
which drive most practical \por~implementations, 
rely on the equivalence of two concrete states.
However, in symbolic traversal, we 
rarely encounter two equivalent symbolic states.


We now make the following definition which is
crucial for the concept of pruning and
will be used throughout this paper.

\begin{definition}[Trace Coverage]
Let $\rho_1, \rho_2$ be two traces of a concurrent program.
We say $\rho_1$ covers $\rho_2$ wrt. a safety property $\psi$, denoted as
$\rho_1 \sqsupseteq_{\psi}  \rho_2$, \emph{\myiff}
$\rho_1 \models \psi \rightarrow \rho_2 \models \psi$.
\end{definition}

Instead of using the concept of trace equivalence, from now on, 
we only make use of the concept of trace coverage.
The concept of trace coverage is definitely weaker than 
the concept of Mazurkiewicz equivalence. 
In fact, if $\rho_1$ and $\rho_2$ are (Mazurkiewicz) equivalent then
$\forall \psi \bullet \rho_1 \sqsupseteq_{\psi}  \rho_2 \wedge \rho_2 \sqsupseteq_{\psi}  \rho_1$. 
Now we will define a new and \emph{weaker} concept which therefore generalizes
the concept of transition independence. 

\ignore{
\begin{definition}[Semi-Commutative After A State]
For all $\theta, w_1, w_2 \in \mathcal{T}^*$ and for all 
$t_1, t_2 \in \mathcal{T}$ and for all property $\psi$
such that (1) $s_0 \stackrel{\theta}{\Longrightarrow} s$ is a feasible 
derivation; (2) 
$\theta w_1 t_1 t_2 w_2$ and $\theta w_1 t_2 t_1 w_2$ 
both are execution traces of the program,
we say $t_1$ \emph{semi-commutes} with $t_2$ \emph{after state} $s$ wrt. $\sqsupseteq_{\psi}$, 
denoted by $\langle s, t_1~\uparrow~t_2, \psi \rangle$ \emph{\myiff}
$\theta w_1 t_1 t_2 w_2 \sqsupseteq_{\psi} \theta w_1 t_2 t_1 w_2$.
\end{definition}
}

\begin{definition}[Semi-Commutative After A State]
For a given concurrent program, a safety property $\safety$,
and a 
derivation $s_0 \stackrel{\theta}{\Longrightarrow} s$,
for all $t_1, t_2 \in \mathcal{T}$ which cannot de-schedule each other,
we say $t_1$ \emph{semi-commutes} with 
$t_2$ \emph{after state} $s$ wrt. $\sqsupseteq_{\psi}$, 
denoted by $\langle s, t_1 \uparrow t_2, \psi \rangle$, \emph{\myiff}
for all $w_1, w_2 \in \mathcal{T}^*$ such that
$\theta w_1 t_1 t_2 w_2$ and $\theta w_1 t_2 t_1 w_2$ 
are execution traces of the program, then we have
$\theta w_1 t_1 t_2 w_2 \sqsupseteq_{\psi} \theta w_1 t_2 t_1 w_2$.
\end{definition}

From the definition, $Rng(\theta)$, $Rng(w_1)$, and $Rng(w_2)$ are pairwise disjoint.
Importantly, if $s$ is at program point $\pc$, 
we have $Rng(w_1) \cup Rng(w_2) \subseteq \mathcal{T}_{\pc} \setminus \{t_1, t_2\}$.  
We observe that wrt. some $\safety$, 
if all important events, those have to do with the safety of the system,
 have already happened in the prefix $\theta$, 
the ``semi-commutative'' relation is trivially satisfied. 
On the other hand, the remaining transitions might 
still interfere with each other (but not the safety), and do 
not satisfy the independent relation.  


The concept of ``semi-commutative'' is obviously weaker 
than the concept of independence. 
If $t_1$ and $t_2$ are independent, it follows that 
$\forall \psi~ \forall s \bullet
\langle s, t_1~\uparrow~t_2, \psi \rangle \wedge 
\langle s, t_2~\uparrow~t_1, \psi \rangle$.
Also note that, in contrast to the relation of transition independence, 
the ``semi-commutative'' relation is \emph{not symmetric}.

We now introduce a new definition for \emph{persistent set}.

\begin{definition}[Persistent Set Of A State]
A set $T \subseteq \mathcal{T}$ of transitions schedulable in a state $s \in \typesymbstate$ is
\emph{persistent in $s$} wrt. a property $\safety$ \emph{\myiff}, for all 
derivations
$s \stackrel{t_1}{\rightarrow} s_1 \stackrel{t_2}{\rightarrow}  s_2 \ldots 
\stackrel{t_{m-1}}{\rightarrow} s_{m-1} \stackrel{t_{m}}{\rightarrow} s_{m}$
including only transitions $t_i \in \mathcal{T}$ and $t_i \not\in T, 1\leq i \leq m$, 
each transition in T \emph{semi-commutes} with $t_i$ after $s$ wrt. $\sqsupseteq_{\psi}$.
\end{definition}

\begin{wrapfigure}{r}{0.55\textwidth}
\begin{center}
\vspace{-4mm}
\pppreset
\small{
$\begin{array}{rl}
\multicolumn{2}{l}{\mbox{Safety property $\safety$ and current state $s$}} \\
\ppp & T ~\Assign ~\emptyset \\
\ppp & \mbox{Add an enabled transition}~t~\mbox{into}~T \\
\ppp & \Foreach~\mbox{remaining schedulable transition}~t_i\\
\ppp & \ptab \If ~ \neg (\forall~ tp_{j} \in T \bullet \langle s, tp_j~\uparrow~t_i, \psi \rangle) \\
\ppp & \ptab \ptab \mbox{Add}~t_i~\mbox{into}~T\\ 
\end{array}$
}
\end{center}
\vspace{-4mm}
\caption{Computing a Persistent Set of a State}
\label{por:algo:persistent}
\end{wrapfigure}

For each state, computing a persistent set from the ``semi-commutative'' 
relation is similar to computing the classical persistent set 
under the transition independence relation. 
The algorithms for this task can be easily adapted from the algorithms 
presented in \cite{godefroid96}.
For convenience, we show one of such possibilities in 
Fig.~\ref{por:algo:persistent}.

As in traditional algorithms, the quality (i.e. the size) of the returned persistent
set is highly dependent on the first transitioned $t$ to be added and the order
in which the remaining transitions $t_i$ are considered. This is, however,
not the topic of the current paper.

With the new definition of persistent set, 
we now can proceed with the normal \emph{selective search} as described in 
classic \por~techniques. In the algorithm presented in 
Fig.~\ref{por:algo:trace}, we
perform depth first search (DFS), and only accommodate safety
verification (invariant property $\safety$).

\begin{theorem}\label{por:theorem:trace}
The selective search algorithm in Fig.~\ref{por:algo:trace} is 
\emph{sound}\footnote{Proof outline is in the Appendix.}. 
\qed
\end{theorem}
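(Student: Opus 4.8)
The plan is to prove soundness of the selective search by showing that whatever pruning the persistent-set-based selective search performs, it never causes a safety violation to be missed. Concretely, soundness here should mean: if the selective search explores only a subset of traces and finds all of them safe wrt. $\safety$, then the full concurrent program is safe wrt. $\safety$. The natural way to establish this is by a trace-coverage argument: I would show that every complete execution trace $\rho$ of the program is \emph{covered} (in the sense of $\sqsupseteq_{\safety}$) by some trace $\rho'$ that the selective search actually explores. Since $\rho' \models \safety$ and $\rho' \sqsupseteq_{\safety} \rho$ implies $\rho \models \safety$, this gives safety of all traces, and hence of the program.

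First I would set up the key lemma that makes the persistent-set machinery work. Recall that a set $T$ persistent in $s$ has the property that along any derivation from $s$ using only transitions outside $T$, each transition in $T$ semi-commutes with the transitions taken. The standard \por{} argument (as in \cite{godefroid96}) shows that if the search at state $s$ explores only the transitions in a persistent set $T$, then any trace $\rho$ that does \emph{not} begin (from $s$) with a transition in $T$ can be rewritten: if $\rho = \theta\, w\, t\, \cdots$ where $t \in T$ is the first $T$-transition and $w$ consists of non-$T$ transitions, then by persistence $t$ semi-commutes with each transition of $w$, so we may successively swap $t$ leftward past $w$ to obtain a trace $\rho'$ that starts with $t \in T$. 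The crucial refinement versus classical \por{} is that each such swap only guarantees \emph{coverage} $\sqsupseteq_{\safety}$ rather than Mazurkiewicz equivalence; so I would show that these swaps compose, i.e. that $\sqsupseteq_{\safety}$ is transitive and is preserved under prepending a common prefix and appending a common suffix. Transitivity of $\sqsupseteq_{\safety}$ is immediate from its definition via implication, and the prefix/suffix closure follows because semi-commutation is defined relative to the reached state $s$ and with the surrounding $w_1, w_2$ quantified, which is exactly what the definition of ``semi-commutes after state $s$'' provides.

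With the lemma in hand, the main argument is an induction on the length of the trace (or equivalently on the depth of the selective search tree). At the root $s_0$ the search explores a persistent set $T_0$; given an arbitrary trace $\rho$ from $s_0$, I rewrite it using the lemma into a trace $\rho'$ whose first transition lies in $T_0$ and with $\rho' \sqsupseteq_{\safety} \rho$, wait---the swap direction gives coverage of the original by the rewritten one, so I must be careful to orient the semi-commutation so that the \emph{explored} trace covers the \emph{unexplored} one; this is precisely why the ``semi-commutative'' relation is deliberately \emph{not symmetric} and why the algorithm in Fig.~\ref{por:algo:persistent} adds $t_i$ to $T$ whenever semi-commutation \emph{fails}. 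Once the first transition of $\rho'$ is in $T_0$, its continuation is a trace from the child state reached by that transition, to which I apply the induction hypothesis, obtaining a fully-explored trace covering $\rho'$ and hence, by transitivity, covering $\rho$.

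The hard part will be getting the orientation and bookkeeping of the semi-commutation swaps exactly right so that the chain of coverings always points from an explored trace to the arbitrary trace, and ensuring the de-schedulability side condition (transitions in $T$ not de-scheduling the $t_i$, and vice versa) is genuinely guaranteed along the whole rewriting, since a swap is only legal when both orderings $\theta w_1 t_1 t_2 w_2$ and $\theta w_1 t_2 t_1 w_2$ are actual traces of the program. I would therefore treat the de-scheduling hypothesis carefully, showing that persistence together with the ``cannot de-schedule each other'' condition baked into the semi-commutation definition keeps both interleavings schedulable throughout the leftward migration of $t$. Infeasible traces require a small separate remark: an infeasible trace vacuously satisfies any safety property, so coverage is trivially met for them, and they need not be explored.
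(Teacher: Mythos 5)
Your plan rests on the same combinatorial mechanism as the paper's proof: migrate the first persistent-set transition leftward past the non-persistent prefix using semi-commutation, with the orientation chosen so that the \emph{explored} ordering (persistent transition first) covers the \emph{unexplored} one --- and you correctly catch, mid-argument, that this orientation is exactly why the relation is asymmetric and why Fig.~\ref{por:algo:persistent} adds $t_i$ to $T$ when semi-commutation \emph{fails}. The difference is only in the logical packaging: you run a direct structural induction establishing the invariant ``every complete trace is covered by an explored trace,'' whereas the paper argues by contradiction on a minimal counterexample, taking a violating unexplored trace maximizing the index $first(\rho)$ of the first non-persistent choice and producing, by the same leftward swaps, a violating trace with a strictly larger $first$-value. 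Your version makes the coverage statement and its compositionality (transitivity of $\sqsupseteq_{\psi}$, closure under common prefix and suffix) explicit, which the paper leaves implicit; the paper's extremal argument avoids having to state the prefix-composition step for the induction on the child state. These are two presentations of the same proof, and either is acceptable.

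There is one genuine gap: your key lemma begins ``if $\rho = \theta\, w\, t \cdots$ where $t \in T$ is the first $T$-transition,'' which silently assumes that every complete trace from $s$ \emph{contains} a transition of $T$. If some complete trace used no transition of $T$ at all, it would never be rewritten into anything the search explores, and the lemma would fail. This is precisely where the ``cannot de-schedule each other'' hypothesis earns its keep, and it is the step the paper spells out: if a derivation from $s$ uses only transitions outside $T$, then (since the commuter and commutee cannot de-schedule one another) every transition of $T$ remains schedulable at the endpoint, so that endpoint is not a terminal state and the derivation is not a complete trace. You invoke de-scheduling only to justify that both interleavings remain legal during the swaps; you also need it for this existence argument. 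Add that case and the proof closes. (Your remark that infeasible traces vacuously satisfy $\psi$ and need no separate treatment is consistent with the paper's conventions.)
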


\begin{wrapfigure}{r}{0.51\textwidth}
\begin{center}
\vspace{-4mm}
\pppreset
\small{
$\begin{array}{rl}
\multicolumn{2}{l}{\mbox{Safety property $\safety$ and initial state $s_0$}} \\
\ppp &Initially: \func{Explore}(s_0) 				\\
\multicolumn{2}{l}{\mbox{\textbf{function}} ~ \func{Explore}(s)} \\
\ppp & 	\If~s \not\models \psi  ~\texttt{Report Error and TERMINATE} \\
\ppp &  T ~\Assign~ \func{Persistent\_Set}(s)			\\
\ppp &  \Foreach~\mbox{enabled transition}~t~in~T~\Do 		\\
\ppp &  \ptab \transition{\symstate}{\symstate'}{t} 
\ptab \ptab {\mbox{/* Execute t */}}		\\
\ppp & \ptab \func{Explore}(s')				\\	
\multicolumn{2}{l}{\mbox{\textbf{end function}}}
\end{array}$
}
\end{center}
\vspace{-4mm}
\caption{New Selective Search Algorithm}
\label{por:algo:trace}
\end{wrapfigure}

In preparing for \por~and \si~to work together,
we now further modify the concept of persistent set
so that it applies for a set of states sharing the same program point.
We remark that the previous definitions apply only for a specific state.
The key intuition is to attach a precondition $\phi$ to the program point of interest, 
indicating when semi-commutativity happens.

\begin{definition}[Semi-Commutative After A Program Point]
For a given concurrent program, a safety property $\safety$,
and  $t_1, t_2 \in \mathcal{T}$,
we say $t_1$ \emph{semi-commutes} with $t_2$ 
\emph{after program point} $\pc$ wrt. 
$\sqsupseteq_\psi$ and $\cons$,
denoted as $\langle \pc, \cons, t_1 \uparrow t_2, \psi \rangle$, 
\emph{\myiff}
for all state  $s \equiv \pair{\pc}{\unknown}$ 
reachable from the initial state 
$s_0$,
if $\symstate \models \cons$ then $t_1$ semi-commutes 
with $t_2$  after state s wrt. $\sqsupseteq_\psi$.
\end{definition}

\begin{definition}[Persistent Set Of A Program Point]
A set $T \subseteq \mathcal{T}$ of transitions \emph{schedulable} at 
program point $\pc$ is \emph{persistent at $\pc$} under the interpolant 
$\Intpsymbol$ wrt. a property $\safety$ \emph{\myiff}, 
for all state  $s \equiv \pair{\pc}{\unknown}$ 
reachable from the initial state 
$s_0$,
if $s \models \Intpsymbol$ then  
for all 
derivations
$s \stackrel{t_1}{\rightarrow} s_1 \stackrel{t_2}{\rightarrow}  s_2 \ldots 
\stackrel{t_{m-1}}{\rightarrow} s_{m-1} \stackrel{t_{m}}{\rightarrow} s_{m}$
including only transitions $t_i \in \mathcal{T}$ and $t_i \not\in T, 1\leq i \leq m$, 
each transition in T semi-commutes with $t_i$ after state $s$ wrt. $\sqsupseteq_\psi$.
\end{definition}

Assume that $T = \{tp_1, tp_2, \cdots tp_k\}$.
The interpolant $\Intpsymbol$ can now be computed as
$\Intpsymbol = \bigwedge \cons_{ji}$ for $1 \leq j \leq k, 1 \leq i \leq m$ 
such that $\langle \pc, \cons_{ji}, tp_j~\uparrow~t_i, \psi \rangle$.

For each program point, it is possible to 
have different persistent sets associated with 
different interpolants. In general,
a state which satisfies a stronger interpolant will have a 
smaller persistent set, therefore, 
it enjoys more pruning.



\section{Synergy of PDPOR and SI}
\label{sec:por:synergy}
We now show our combined framework.
We assume for each program point, a persistent set and its associated interpolant 
are computed statically, 
i.e., by separate analyses.
In other words, when we are at a program point, we can right away make use of 
the information about its persistent set.


The algorithm is in Fig.~\ref{por:algo:static}. 
The function $\func{Explore}$ has input $s$ and 
assumes the safety property at hand is $\safety$.
It naturally performs a depth first search of the state space.

\begin{wrapfigure}{r}{0.54\textwidth}
\begin{center}
\vspace{-2mm}
\pppreset
\small{
$\begin{array}{rl}
\multicolumn{2}{l}{\mbox{Assume safety property $\safety$ and initial state $s_0$}} \\
\ppp &Initially: \func{Explore}(s_0)				\\
\multicolumn{2}{l}{\mbox{\textbf{function}} ~ \func{Explore}(s)} \\
& \mbox{Let $\symstate$ be $\pair{\pc}{\unknown}$} \\
\ppp \label{por:algo:base1} & \If~(\func{memoed}(s, \Intpsymbol))~\Return~\Intpsymbol		\\
\ppp \label{por:algo:base2} & \If~(\symstate \not\models \safety)  
~\texttt{Report Error and TERMINATE}\\
\ppp &  \Intpsymbol ~\Assign~ \safety						\\

\ppp &  \langle T, \Intpsymbol_{trace} \rangle ~\Assign~ \func{Persistent\_Set}(\pc)			\\
\ppp & \If~(\symstate \models \Intpsymbol_{trace}) \\
\ppp & \ptab \func{Ts}~\Assign~T \\
\ppp\label{por:algo:combine_trace} & \ptab \Intpsymbol ~\Assign~\Intpsymbol~\wedge~\Intpsymbol_{trace} \\
\ppp & \Else~\func{Ts}~\Assign~Schedulable(s) \\

\ppp &  \Foreach~t~in~(\func{Ts}~\setminus~Enabled(s))~\Do	\\
\ppp \label{por:algo:disabled}& \ptab \ptab \Intpsymbol ~\Assign~\Intpsymbol~\wedge~
\pre{t}{\false}	\\

\ppp \label{por:algo:enabled-persistent} &  \Foreach~t~in~(\func{Ts}~\cap~Enabled(s))~\Do	 		\\
\ppp \label{por:algo:extend}&  \ptab \transition{\symstate}{\symstate'}{t}	
\ptab \ptab \ptab \ptab  {\mbox{/* Execute t */}}		\\
\ppp \label{por:algo:recursive}&  \ptab \Intpsymbol' ~\Assign~ \func{Explore}(s')			\\
\ppp \label{por:algo:child}&  \ptab \Intpsymbol ~\Assign ~ \Intpsymbol~\wedge~
\pre{t}{\Intpsymbol'}		\\	
\ppp\label{por:algo:return}&  \func{memo}~$and$~\Return~(\Intpsymbol)					\\
\multicolumn{2}{l}{\mbox{\textbf{end function}}}
\end{array}$
}
\end{center}
\vspace{-10pt}
\caption{A Framework for POR and SI (DFS)} 
\label{por:algo:static}
\end{wrapfigure}

\noindent \textbf{Two Base Cases:} The function $\func{Explore}$ handles two base cases. 
One is when the current state is subsumed by some computed 
(and memoed) interpolant $\Intpsymbol$.
No further exploration is needed, and $\Intpsymbol$ is returned as the interpolant 
(line~\ref{por:algo:base1}). 
The second base case is when the current state is found to be \emph{unsafe}
(line~\ref{por:algo:base2}).

\vspace{2pt}
\noindent \textbf{Combining Interpolants:} 
We make use of the (static) persistent set $T$ computed for the current program point.
We comment further on this in the next section. 

The set of transitions to be considered is denoted by \func{Ts}.
When the current state implies the interpolant $\Intpsymbol_{trace}$ associated with $T$,
we need to consider only those transitions in $T$. Otherwise, we need to consider
all the schedulable transitions. 
Note that when the persistent set $T$ is employed,
the  interpolant $\Intpsymbol_{trace}$ must contribute to the combined interpolant 
of the current state (line~\ref{por:algo:combine_trace}).
Disabled transitions at the current state will strengthen the interpolant 
as in line~\ref{por:algo:disabled}.
Finally, we recursively follow those transitions which 
are enabled at the current state.
The interpolant of each child state contributes to the interpolant of the current state as
in line~\ref{por:algo:child}. In our framework, interpolants
are propagated back using the precondition operation \wpc,
where $\pre{t}{\phi}$ denotes a \emph{safe approximation} of the weakest precondition
wrt. the transition $t$ and the postcondition $\phi$ \cite{dijkstrawp}.



\begin{theorem}
The algorithm in Fig.~\ref{por:algo:static} is 
\emph{sound}\footnote{Proof outline is in the Appendix.}. 
\qed
\end{theorem}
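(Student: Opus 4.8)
The plan is to establish \emph{soundness} in the sense that whenever the top-level call $\func{Explore}(\symstate_0)$ returns a formula (rather than reporting an error), every state reachable from $\symstate_0$ is safe wrt.\ $\safety$. Since each $trans_i$ is acyclic, the symbolic execution unfolds into a finite, well-founded structure, so I would argue by induction on the height of $\symstate$ (the length of the longest derivation out of $\symstate$). The induction hypothesis $I(\symstate)$ I would carry is twofold: if $\func{Explore}(\symstate)$ returns $\Intpsymbol$, then (i) $\symstate \models \Intpsymbol$, and (ii) $\Intpsymbol$ is a \emph{sound interpolant} for $\pc = \texttt{pc}(\symstate)$, i.e.\ every reachable state $\symstate' \equiv \pair{\pc}{\unknown}$ with $\symstate' \models \Intpsymbol$ satisfies $\bigtriangleup_{\safety}(\symstate')$. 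Instantiating (ii) at $\symstate_0$, which satisfies its own returned interpolant by (i), yields $\bigtriangleup_{\safety}(\symstate_0)$, i.e.\ program safety; the contrapositive of $I$ simultaneously gives completeness of error reporting (any reachable unsafe state forces an error).

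For the base cases I would reason as follows. At the \memoed{} case (line~\ref{por:algo:base1}), the returned $\Intpsymbol$ was memoized at line~\ref{por:algo:return} only by an earlier \emph{error-free} returning call, hence is a sound interpolant by the induction hypothesis; the guard $\symstate \models \Intpsymbol$ then gives $\bigtriangleup_{\safety}(\symstate)$, so pruning here is justified. At the unsafe case (line~\ref{por:algo:base2}), $\symstate \not\models \safety$; because infeasible states satisfy any safety property, $\symstate$ must be feasible and reachable, so it is a genuine counterexample and reporting an error is sound (no false alarm). Finally, a terminal $\symstate$ (with $Schedulable(\symstate)=\emptyset$) makes both loops vacuous and returns $\safety$ (possibly conjoined with $\Intpsymbol_{trace}$); since $\symstate$ passed line~\ref{por:algo:base2} and has no successors, it is trivially a safe root.

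The inductive step for the recursive case is where the two forms of pruning must be reconciled. Here $\Intpsymbol = \safety \wedge [\Intpsymbol_{trace}] \wedge \bigwedge_{t\in \func{Ts}\setminus Enabled(\symstate)} \pre{t}{\false} \wedge \bigwedge_{t\in \func{Ts}\cap Enabled(\symstate)} \pre{t}{\Intpsymbol'}$. Part (i), $\symstate\models\Intpsymbol$, follows conjunct by conjunct: $\safety$ from line~\ref{por:algo:base2}, $\Intpsymbol_{trace}$ from the guard on line~\ref{por:algo:combine_trace}, and each $\pre{t}{\cdot}$ because $\func{pre}$ is a safe approximation implied by the actual precondition at $\symstate$. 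For part (ii), take any reachable $\symstate'\equiv\pair{\pc}{\unknown}$ with $\symstate'\models\Intpsymbol$. Then $\symstate'\models\safety$, so $\symstate'$ itself is safe. For each enabled $t\in\func{Ts}$, $\symstate'\models\pre{t}{\Intpsymbol'}$ together with the Hoare property of $\func{pre}$ forces the $t$-successor of $\symstate'$ to satisfy $\Intpsymbol'$; since it shares the program point of the explored child $\symstate'$ of line~\ref{por:algo:extend}, the induction hypothesis (ii) makes it a safe root (line~\ref{por:algo:child}). For each disabled $t\in\func{Ts}$, $\symstate'\models\pre{t}{\false}$ (line~\ref{por:algo:disabled}) forces the $t$-successor to be infeasible, hence safe. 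Thus every successor of $\symstate'$ reached by a transition in $\func{Ts}$ is a safe root.

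What remains---and what I expect to be the main obstacle---is the \por{} coverage step: when $\symstate\models\Intpsymbol_{trace}$ we set $\func{Ts}=T$ (line~\ref{por:algo:enabled-persistent}) and explore only the persistent set, so I must show no unsafe state is reachable from $\symstate'$ via a derivation using transitions \emph{outside} $T$. Since $\symstate'\models\Intpsymbol\Rightarrow\symstate'\models\Intpsymbol_{trace}$, the Persistent-Set-of-a-Program-Point property makes $T$ persistent at $\symstate'$, so each transition in $T$ \emph{semi-commutes} after $\symstate'$ with every transition along any $T$-avoiding derivation. Given a hypothetical unsafe derivation $\rho$ from $\symstate'$, I would locate its earliest $T$-transition $t$ and repeatedly swap it leftward past the preceding non-$T$ transitions $t_i$; each swap produces a trace that covers the previous one wrt.\ $\sqsupseteq_{\safety}$ by $\langle \symstate', t\uparrow t_i, \safety\rangle$, and since coverage propagates unsafety backwards, this yields an unsafe derivation beginning with $t\in T$, contradicting the previous paragraph (the $T$-avoiding-forever case is handled by commuting in a guaranteed enabled $T$-transition, which exists by the persistent-set construction). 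Making this rigorous in the symbolic setting is the delicate part: (a) the semi-commutativity relation is \emph{asymmetric}, so the swap direction and the coverage inequality must be tracked in exactly the orientation above; (b) one must verify the swapped sequences remain valid executions, which is precisely what the ``cannot de-schedule each other'' side-condition and preservation of schedulability provide; and (c) the quantification ranges over \emph{all} reachable $\symstate'$ at $\pc$, so it is essential to invoke the per-program-point persistent set and its interpolant $\Intpsymbol_{trace}$ rather than the per-state version. With this step in hand, $\bigtriangleup_{\safety}(\symstate')$ holds, completing (ii) and hence the induction.
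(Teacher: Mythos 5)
Your proposal is correct and follows essentially the same route as the paper's own proof outline: a well-founded (structural) induction in which the returned interpolant is decomposed into the conjuncts contributed by the safety check, the persistent-set interpolant $\Intpsymbol_{trace}$, the disabled transitions via $\pre{t}{\false}$, and the explored children via $\pre{t}{\Intpsymbol'}$, with the pruning of non-persistent branches justified by the Theorem~\ref{por:theorem:trace}-style leftward-commutation argument. The only difference is one of presentation --- you inline and generalize that commutation argument and treat the memo and terminal base cases explicitly, where the paper argues on a representative three-transition configuration and simply cites Theorem~\ref{por:theorem:trace} --- which makes your outline somewhat more rigorous but not a different proof.
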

\section{Implementation of PDPOR}
\label{sec:por:implementation}
We now elaborate on the remaining task: how to
estimate the semi-commutative relation, thus deriving
the (static) persistent set at a program point. 
Similar to the formalism of traditional \por{}, our formalism is
of paramount importance for the semantic use as well as to construct 
the formal proof of correctness (see the Appendix).
In practice, however, we have to come up with sufficient conditions 
to efficiently implement the concepts.
In this paper, we estimate the semi-commutative relation in two steps: 

\vspace{-2mm}

\begin{enumerate}

\item
We first employ \emph{any} traditional \por~method
and first estimate the ``semi-commutative'' relation as
the traditional independence relation 
(then the corresponding condition $\phi$ is just $\true$). 
This is possible because the proposed concepts are \emph{strictly weaker} than the
corresponding concepts used in traditional \por~methods.

\item 
We then identify and exploit a number of patterns under which 
we can statically derive and prove  
the semi-commutative relation between transitions. 
In fact, these simple patterns suffice to deal with 
a number of important real-life applications.

\end{enumerate}

\vspace{-2mm}

\noindent
In the rest of this section, we outline
three common classes of problems,
from which the semi-commutative relation 
between transitions can be easily identified and proved, i.e., our step
2 becomes applicable.

\vspace{2pt}
\noindent
\textbf{\underline{Resource Usage of Concurrent Programs:}}
Programs make use of limited resource (such as time, memory, bandwidth).
Validation of resource usage in sequential setting is already a hard problem. 
It is obviously more challenging in the setting of concurrent programs
due to process interleavings.

Here we model this class of problems by using a resource variable $\resource$. 
Initially, $\resource$ is \emph{zero}. 
Each process can increment or decrement variable
$\resource$ by some concrete value (e.g., memory allocation or deallocation respectively).
A process can also double the value $\resource$ 
(e.g., the whole memory is duplicated).
However, the resource variable $\resource$ 
cannot be used in the guard condition of any
transition, i.e., we cannot model the behavior of
a typical garbage collector.
The property to be verified is that, ``at all times, $\resource$ is (upper-) 
bounded by some constant''.

\begin{proposition}\label{por:prop1}
Let $\resource$ be a  resource variable of a concurrent program, 
and assume the safety property at hand is
$\psi \equiv \resource \le C$, where C is a constant. 
For all transitions (assignment operations only)
$t_1: \resource = \resource + c_1$, $t_2: \resource = \resource * 2$, 
$t_3: \resource = \resource - c_2$
where $c_1, c_2 > 0$, we have for all program points $\pc$:\\
\indent
$\langle \pc, \true, t_1~\uparrow~t_2, \psi \rangle \wedge 
\langle \pc, \true, t_1~\uparrow~t_3, \psi \rangle \wedge 
\langle \pc, \true, t_2~\uparrow~t_3, \psi \rangle$ \qed
\end{proposition}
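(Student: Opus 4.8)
The plan is to peel off the two layers of definitions and reduce everything to an elementary arithmetic comparison. By the definition of \emph{Semi-Commutative After A Program Point} with $\cons \equiv \true$, it suffices to fix an arbitrary reachable state $s \equiv \pair{\pc}{\unknown}$ and establish the corresponding \emph{Semi-Commutative After A State} relation; unfolding that together with \emph{Trace Coverage}, the goal for, say, the pair $t_1 \uparrow t_2$ becomes: for every $w_1, w_2$ such that both $\rho_1 \equiv \theta w_1 t_1 t_2 w_2$ and $\rho_2 \equiv \theta w_1 t_2 t_1 w_2$ are traces, $\rho_1 \models \psi$ implies $\rho_2 \models \psi$. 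The crucial simplification is that $t_1, t_2, t_3$ write only $\resource$, so along $\rho_1$ and $\rho_2$ every non-resource variable takes identical values at corresponding points, and -- because the resource model forbids $\resource$ from occurring in any guard -- the enabledness of every transition is value-independent. Hence both derivations reach a common program point after the swapped pair and then execute the very same suffix $w_2$, the two traces differing only in the value of $\resource$; since $\psi \equiv \resource \le C$ constrains $\resource$ alone, I only have to compare resource values state-by-state.

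First I would record that the common prefix $\theta w_1$ ends in a state whose resource value I call $r$, and that $\rho_1 \models \psi$ already forces $r \le C$ as well as the obvious bounds on the two states $\rho_1$ produces from the swapped pair. I then compute, for each of the three pairs, the resource values along the covered ordering $\rho_2$. For $t_1 \uparrow t_2$ the $t_2 t_1$ order yields $2r$ and then $2r + c_1$, both $\le 2r + 2c_1 \le C$, where the last bound comes from $\rho_1$ and the first inequalities from $c_1 > 0$. For $t_1 \uparrow t_3$ the $t_3 t_1$ order yields $r - c_2 \le r \le C$ and then $r + c_1 - c_2$, which is exactly the value $\rho_1$ attains after $t_1 t_3$ and so is $\le C$. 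For $t_2 \uparrow t_3$ the $t_3 t_2$ order yields $r - c_2 \le r \le C$ and then $2r - 2c_2 \le 2r - c_2 \le C$, again using $c_2 > 0$ and the safety of $\rho_1$. Thus every state newly introduced by the swap in $\rho_2$ satisfies $\psi$.

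It remains to handle the common suffix $w_2$, and this is where I expect the only real (though still routine) work to lie. In each case the resource value entering $w_2$ along $\rho_2$ is at most the value entering $w_2$ along $\rho_1$ -- equal in the $t_1 \uparrow t_3$ case, strictly smaller in the other two. Every operation $w_2$ may perform on $\resource$, namely $\resource + c$, $\resource \ast 2$, and $\resource - c$, is a monotone nondecreasing function of $\resource$, while transitions not touching $\resource$ act as the identity on it; and since $\resource$ never occurs in a guard, $w_2$ executes the identical sequence of these operations in both traces. A straightforward induction along $w_2$ then shows that at every state the $\rho_2$ resource value stays $\le$ the $\rho_1$ value, which is $\le C$ by $\rho_1 \models \psi$. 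Hence all states of $\rho_2$ satisfy $\psi$, i.e.\ $\rho_1 \sqsupseteq_\psi \rho_2$, proving the three relations. I would close by noting that the asymmetry of $\uparrow$ is genuine: the reverse directions such as $t_3 \uparrow t_1$ fail, because a safe $t_3 t_1$-ordering gives no bound on the larger intermediate value $r + c_1$ produced by the $t_1 t_3$-ordering.
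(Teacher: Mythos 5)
Your proof is correct and follows exactly the route the paper intends but never writes out (the paper only remarks that the proposition ``can be proved by using basic laws of algebra''): unfold the definitions down to a comparison of resource values, check the two swapped states by elementary inequalities, and push the bound through the common suffix. The one substantive detail you supply beyond the paper's one-line justification --- the induction along $w_2$ using the fact that $\resource+c$, $\resource-c$ and $2\resource$ are all monotone in $\resource$ and that $\resource$ never appears in a guard --- is indeed necessary and is handled correctly, as is your closing observation that the relation is genuinely asymmetric.
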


Informally, other than common mathematical facts such as
additions can commute and so do multiplications and subtractions, 
we also deduce that additions can semi-commute with both multiplications and subtractions while
multiplications can semi-commute with subtractions. 
This Proposition can be proved by using basic laws of algebra.


\newexample{}
Let us refer back to the example of two closely coupled processes introduced in 
Sec.~\ref{sec:por:prelim}, but now under the assumption that $x$ is the resource variable of interest.
Using the semi-commutative relation derived from Proposition~\ref{por:prop1}, we need to explore only 
\emph{one complete trace} to prove this safety.


We recall that, in contrast, \por~(and {\small \textsf{DPOR}})-only methods 
will enumerate the full execution tree which
contains 19 states and 6 complete execution traces.
Any technique which employs only the notion of Mazurkiewicz trace equivalence for pruning 
will have to consider all 6 complete traces (due to 6 different terminal states).
\si~alone can reduce the search space in this example, and requires to explore 
only 9 states and 4 subsumed states (as in Sec.~\ref{sec:por:prelim}).

\vspace{2pt}
\noindent
\textbf{\underline{Detection of Race Conditions:}}
\cite{wang08atva} proposed a property driven pruning algorithm to 
detect race conditions in multithreaded programs.
This work has achieved more reduction in comparison with \dpor{}.
The key observation is that, at a certain location (program point) $\pc$,
if their conservative ``lockset analysis'' shows that a search subspace is race-free,
the subspace can be pruned away. 
As we know, \dpor{} relies solely on the independence  relation to prune redundant interleavings
(if $t_1$, $t_2$ are independent, there is no need to flip their execution order).
In \cite{wang08atva}, however, even when $t_1$, $t_2$ are dependent, 
we may skip the corresponding search space 
if flipping the order of $t_1$, $t_2$ does not affect the reachability 
of any race condition. 
In other words, \cite{wang08atva} is indeed a (conservative) realization of 
our \pdpor, specifically targeted for detection of race conditions.
Their mechanism to capture the such scenarios is by introducing 
a trace-based lockset analysis.

\vspace{2pt}
\noindent
\textbf{\underline{Ensuring Optimistic Concurrency:}}
In the implementations of many concurrent protocols, \emph{optimistic concurrency},
i.e., at least one process commits, is usually desirable.
This can be modeled by introducing a \textsf{flag} variable which will
be set when some process commits. The \textsf{flag} variable once
set can not be unset. 
It is then easy to see that for all program point $\pc$ and
transitions $t_1, t_2$, we have 
$\langle \pc, \textsf{flag = 1}, t_1~\uparrow~t_2, \psi \rangle$.
Though simple, this observation will bring us more reduction compared
to traditional \por{} methods.

\section{Experiments}
\label{sec:por:experiment}

This section conveys two key messages. First,
when trace-based and state-based  methods
are not effective individually, our combined framework still
offers significant reduction. Second, property driven \por{} can be very
effective, and applicable not only to academic programs, but
also to programs used as benchmarks in the state-of-the-art.

We use a 3.2 GHz Intel processor and 2GB memory running Linux. 
Timeout is set at 10 minutes.  
In the tables, cells with `-' indicate timeout.
We compare the performance of Partial Order Reduction alone (\por), State Interpolation alone
(\si), the synergy of Partial Order Reduction and State Interpolation  
({\small \textsf{POR+SI}}), i.e., the semi-commutative relation is estimated using 
only step 1 presented in Sec.~\ref{sec:por:implementation}, and when applicable, the 
synergy of Property Driven Partial Order Reduction and State Interpolation  ({\small \textsf{PDPOR+SI}}),
i.e., the semi-commutative relation is estimated using both
steps presented in Sec.~\ref{sec:por:implementation}.
For the \por{} component, we use the implementation from \cite{ase11kobor}.

Table~\ref{por:table:main} 
starts with parameterized versions of the \emph{producer/consumer} example
because its basic structure is extremely common.
There are $2 * N$ producers and 1 consumer.  
Each producer will do its own non-interfered computation
first, modeled by a transition which does not interfere with other
processes. Then these producers will modify the shared variable
$x$ as follows: each of the first $N$ producers increments $x$, while
the other $N$ producers double the value of $x$.  On the other hand, the
consumer consumes the value of $x$.  The safety property is that
the consumed value is no more than $N * 2^N$.

Table~\ref{por:table:main} 
clearly demonstrates the synergy benefits
of \por~and \si. {\small \textsf{POR+SI}} 
significantly outperforms both \por~and \si.
Note that this example can easily be translated to the resource usage problem, 
where our \pdpor{} requires only a {\em single} trace (and less than 0.01 second)
in order to prove safety.

\begin{wraptable}{r}{0.62\textwidth}
\centering{
\small{
\caption{Synergy of \por{} and \si{}}
\begin{tabular}{|c|r|r||r|r||r|r|}
  \cline{1-7} 
  & \multicolumn{2}{|c||}{\textsf{POR}} & \multicolumn{2}{|c||}{\textsf{SI}} & 
  \multicolumn{2}{|c|}{\textsf{POR+SI}} \\ 
\cline{1-7} 
 
  \textsf{Problem} &  \textsf{States} & \textsf{T(s)} & \textsf{States} & 
  \textsf{T(s)} & \textsf{States} &  \textsf{T(s)} \\ 
 \cline{1-7} 
  \texttt{p/c-2}    & $449$ & $0.03$ & $514$ & $0.17$ & $85$ & $0.03$ \\
  \texttt{p/c-3}   & $18745$ & $2.73$ & $6562$ & $2.43$ & $455$ & $0.19$ \\ 
  \texttt{p/c-4}    & $986418$ & $586.00$ & $76546$ & $37.53$ & $2313$ & $1.07$ \\ 
  \texttt{p/c-5}     & $-$ & $-$ & $-$ &  $-$ & $11275$ & $5.76$ \\ 
  \texttt{p/c-6}       & $-$  & $-$ & $-$ & $-$ & $53261$ & $34.50$ \\ 
  \texttt{p/c-7}       & $-$  & $-$ & $-$ & $-$ & $245775$ & $315.42$ \\ 
 \hline
    \hline  
  \texttt{din-2a}    &  $22$ & $0.01$ & $21$ & $0.01$ & $21$ & $0.01$  \\
  \texttt{din-3a}     & $646$ & $0.05$ & $153$ & $0.03$ & $125$ & $0.02$ \\
  \texttt{din-4a}    & $155037$ & $19.48$ & $1001$ &$0.17$ & $647$ & $0.09$ \\
  \texttt{din-5a}     & $-$ & $-$ & $6113$  &$1.01$  & $4313$ & $0.54$ \\
  \texttt{din-6a}      & $-$ & $-$ & $35713$ &$22.54$  & $24201$ & $4.16$ \\
  \texttt{din-7a}      & $-$ & $-$ & $202369$ & $215.63$  & $133161$ & $59.69$ \\
 \hline
 \hline
  \texttt{bak-2}     & $48$ & $0.03$ & $38$ & $0.03$ & $31$ & $0.02$ \\
  \texttt{bak-3}     & $1003$ & $1.85$ & $264$ & $0.42$ & $227$ & $0.35$ \\
  \texttt{bak-4}    & $27582$ & $145.78$ & $1924$ & $5.88$ & $1678$ & $4.95$  \\
  \texttt{bak-5}    & $-$ & $-$ & $14235$ & $73.69$ & $12722$ & $63.60$ \\
\hline

\end{tabular}
\label{por:table:main}
}}
\end{wraptable}

We next use the parameterized version of the \emph{dining philosophers}.
We chose this for two reasons.  First,
this is a classic example often used in concurrent algorithm design to
illustrate synchronization issues and techniques for resolving them.
Second, previous work~\cite{nec09cav}
has used this to
demonstrate benefits from combining \por~and \smt{}.

The first safety property used in \cite{nec09cav}, ``it is not
that all philosophers can eat simultaneously'', is somewhat trivial.
Therefore, here we verify a \emph{tight} property,  
which is (a): ``no more than \emph{half} the philosophers can eat simultaneously''.
To demonstrate the power of symbolic execution,
we verify this property \emph{without} knowing the initial configurations 
of all the forks.
Table~\ref{por:table:main}, again, demonstrates the
significant improvements of {\small \textsf{POR+SI}} over \por{} alone and \si{} alone.
We note that the performance of our {\small \textsf{POR+SI}} algorithm
is about 3 times faster than \cite{nec09cav}.

We additionally considered a second safety property as in \cite{nec09cav}, namely (b):
``it is possible to reach a state in which all philosophers have eaten at least once''.
Our symbolic execution framework requires only a \emph{single
trace} (and less than $0.01$ second) to prove this property in all instances, 
whereas \cite{nec09cav} requires even more time
compared to proving property (a). This illustrates
the scalability issue of \cite{nec09cav}, which is 
representative for other techniques employing  general-purpose 
\smt~solver for symbolic pruning.

We also perform experiments on the ``Bakery'' algorithm.
Due to existence of infinite domain variables, 
model checking hardly can handle this case. 
Here we remark that in symbolic methods, loop handling is often considered
as an orthogonal issue.
Programs with statically bounded loops can be easily unrolled into 
equivalent loop-free programs.
For unbounded loops, either loop invariants are provided or the 
employment of some invariant discovery routines, e.g., \cite{jaffar11rv},
is necessary. In order for our algorithm to work here,
we make use of the standard loop invariant for this example.

To further demonstrate the power our synergy framework
over \cite{nec09cav}\footnote{\cite{nec09cav}
is not publicly available. Therefore, it is not possible for us to 
make more comprehensive comparisons.}  as well as the power of our property driven \por{},
we experiment next on the \emph{Sum-of-ids} program.
Here, each process (of $N$ processes) has one unique \emph{id} and will
increment a shared variable \emph{sum} by this \emph{id}.
We prove that in the end this variable will be incremented by 
the sum of all the ids.

\begin{wraptable}{r}{0.6\textwidth}
\centering{
\small{
\caption{Comparison with \cite{nec09cav}}
\begin{tabular}{|r|r|r|r||r|r||r|r|}
  \hline 
  & \multicolumn{3}{|c||}{\textsf{\cite{nec09cav} w. Z3}}
   &  \multicolumn{2}{|c||}{\textsf{POR+SI}}        &  \multicolumn{2}{|c|}{\textsf{PDPOR+SI}} \\
  \hline
  \textsf{T(s)} & \textsf{\#C} & \textsf{\#D} & \textsf{T(s)} & \textsf{States} & 
  \textsf{T(s)}  & \textsf{States} & \textsf{T(s)} \\ 
  \hline  
  \texttt{sum-6}   & $1608$ & $1795$ & $0.08$ & $193$ & $0.05$ & $7$ & $0.01$ \\
  \texttt{sum-8}     & $54512$ & $59267$ & $10.88$ & $1025$ & $0.27$  & $9$ & $0.01$ \\
  \texttt{sum-10}    & $-$ & $-$ & $-$ & $5121$ & $1.52$  & $11$ & $0.01$ \\
  \texttt{sum-12}     & $-$ & $-$ & $-$ & $24577$ & $8.80$ & $13$ & $0.01$ \\
  \texttt{sum-14}     & $-$ & $-$ & $-$ & $114689$ & $67.7$ &  $15$ & $0.01$ \\
  \hline

\end{tabular}
\label{por:table:sum}
}
}
\end{wraptable}

See Table~\ref{por:table:sum}, where we experiment with Z3~\cite{z3} (version 4.1.2)
using the encodings presented in \cite{nec09cav}.
\textsf{\#C} denotes the
number of conflicts while \textsf{\#D} denotes the
number of decisions made by Z3.
We can see that our synergy framework scale much better than \cite{nec09cav} with Z3.
Also, this example can also be translated to resource usage problem, 
our use of property-driven \por{} again requires \emph{one} single trace to prove safety.

\begin{wraptable}{r}{0.55\textwidth}
\centering
\small{
\caption{Experiments on  \cite{ICSE11}'s Programs}
\begin{tabular}{|l|r|r||r|r||r|r|}
  \hline 
    & \multicolumn{2}{|c||}{\textsf{\cite{ICSE11}}} & \multicolumn{2}{|c||}{\textsf{SI}} 
                                          & \multicolumn{2}{|c|}{\textsf{PDPOR+SI}} \\
  \hline
\textsf{Problem}  &  \textsf{C} & \textsf{T(s)} & \textsf{States} &  \textsf{T(s)} 
& \textsf{States} &  \textsf{T(s)} \\ 
  \hline  
  \textsf{micro\_2}    & $17$ & $1095$ & $20201$ & $10.88$ & $201$ & $0.04$   \\
  \textsf{stack}    & $12$ & $225$ & $529$ & $0.26$ & $529$ & $0.26$         \\
  \textsf{circular\_buffer}    & $\infty$ & $477$ & $29$ & $0.03$ & $29$ & $0.03$  \\
  \textsf{stateful20}    & $10$ & $95$ & $1681$ & $1.13$ & $41$ & $0.01$  \\
  
 \hline
\end{tabular}
\label{por:table:smt}
}
\end{wraptable}

Finally, to benchmark our framework with \smt-based methods,
we select four \emph{safe} programs from \cite{ICSE11} 
where the experimented methods did not perform well.
Those programs are \textsf{micro\_2}, \textsf{stack}, \textsf{circular\_buffer},
and \textsf{stateful20}. We note that safe programs allow fairer 
comparison between different approaches since to verify them we have to cover
the whole search space.
Table~\ref{por:table:smt} shows the running time of \si{} alone and
of the combined framework. For convenience, we also
tabulate the \emph{best} running time reported in \cite{ICSE11} 
and \textsf{C} is the context switch bound used.
We assume no context switch bound, hence
the corresponding value in our framework is $\infty$.

We can see that even our \si{} alone significantly 
outperforms the techniques in \cite{ICSE11}.
We believe it is due to the inefficient encoding of process interleavings
(mentioned in Sec.~\ref{sec:por:related}) as well as the following reasons.
First, our method is \emph{lazy}, 
which means that only a path is considered at a time:
\cite{ICSE11} itself demonstrates partially the usefulness of this.
Second, but importantly, we are \emph{eager} in discovering infeasible paths.
The program \textsf{circular\_buffer}, which has only
one feasible complete execution trace, can be efficiently handled by our framework,
but not \smt{}.
This is one important advantage of our symbolic execution framework over
\smt{}-based methods, as discussed in
\cite{mcmillan10cav}.

It is important to note that, \pdpor{} significantly
improves the performance of \si{} wrt. programs \textsf{micro\_2}
and \textsf{stateful20}. This further demonstrates the applicability
of our proposed framework.


\section{Conclusion}
\label{sec:por:conclusion}
We present a concurrent verification framework which synergistically combines
trace-based reduction techniques with the recently established notion
of \emph{state interpolant}.
One key contribution is the new concept of property-driven \por{} which serves to reduce
more interleavings than previously possible. We believe that fully automated techniques
to compute the semi-commutative relations for \emph{specific}, but important, 
application domains will be interesting future work.


\scriptsize{
\bibliographystyle{plain}
\bibliography{references}
}

\newpage
\appendix

\normalsize

\section*{Appendix: Proofs of the Two Theorems}\label{sec:app:proof}
\setcounter{theorem}{0}
\begin{theorem}\label{por:theorem:trace}
The selective search algorithm in Fig.~\ref{por:algo:trace} is sound.
\end{theorem}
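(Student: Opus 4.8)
The plan is to prove soundness of the selective search algorithm in Fig.~\ref{por:algo:trace} by showing that whenever the algorithm declares the program safe (i.e.\ it terminates without reporting an error), every complete trace of the program is in fact safe wrt.\ $\safety$. The algorithm explores only a subset of the traces, namely those obtained by restricting the choice at each state $s$ to the enabled transitions in a persistent set $T = \func{Persistent\_Set}(s)$. So the crux is a \emph{covering} argument: for every complete trace $\rho$ of the program, there exists a trace $\rho'$ actually explored by the algorithm such that $\rho' \sqsupseteq_{\safety} \rho$. Since the algorithm verifies $\rho' \models \safety$ for every explored $\rho'$ (it checks $s \not\models \safety$ at each visited state and terminates otherwise), the coverage relation then yields $\rho \models \safety$ by the definition of $\sqsupseteq_{\safety}$.

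\textbf{Main technical lemma.} First I would establish the key lemma, which is the symbolic/property-driven analogue of the classical persistent-set argument of~\cite{godefroid96}: if $T$ is persistent in $s$ wrt.\ $\safety$, and $\rho = w \cdot \sigma$ is any feasible trace continuing from $s$ via a sequence $w$ that \emph{avoids} all transitions of $T$ until some transition $t \in T$ is eventually taken (or never taken), then one can permute $\rho$ into a trace $\rho'$ that begins with a transition from $T$, such that $\rho' \sqsupseteq_{\safety} \rho$. The proof proceeds by repeatedly swapping $t \in T$ leftward past the preceding transitions. By the definition of \emph{persistent set of a state}, each transition in $T$ semi-commutes after $s$ with every transition $t_i \notin T$ appearing along such a $T$-free derivation; hence each adjacent swap $\theta w_1 t_i t\, w_2 \rightsquigarrow \theta w_1 t\, t_i w_2$ preserves coverage, i.e.\ $\theta w_1 t_i t\, w_2 \sqsupseteq_{\safety} \theta w_1 t\, t_i w_2$ (modulo direction; note the semi-commutative relation is not symmetric, so I must be careful to bubble the $T$-transition in the \emph{direction the relation licenses}). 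Chaining these single swaps, and using transitivity of $\sqsupseteq_{\safety}$ (which follows immediately from its definition as implication of $\models \safety$), gives a trace whose first move lies in $T$ and which covers the original $\rho$.

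\textbf{Induction.} With the lemma in hand, I would argue by induction on the length of the trace (equivalently, on the structure of the DFS tree, using that $trans_i$ is acyclic so traces are finite). Given an arbitrary complete trace $\rho$ from $s_0$, apply the lemma at $s_0$ to obtain a covering trace whose first transition $t$ lies in the persistent set $T$ computed at $s_0$; since the algorithm's \func{Explore} loop recurses on every enabled $t \in T$, the state $s' = s_0 \stackrel{t}{\rightarrow}$ is one the algorithm visits. The suffix of the covering trace from $s'$ is a complete trace from $s'$, and by the induction hypothesis it is covered by some trace explored in the subtree rooted at $s'$. Composing the first step with the covered suffix, and using that $\sqsupseteq_{\safety}$ is preserved under prefixing by a common transition, yields a fully explored trace $\rho'$ with $\rho' \sqsupseteq_{\safety} \rho$. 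Disabled (schedulable-but-not-enabled) transitions need a separate remark: a trace that takes such a transition becomes infeasible, and infeasible states satisfy any safety property by assumption, so these traces are vacuously safe and require no coverage.

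\textbf{Where I expect the difficulty.} The main obstacle is the \emph{asymmetry} of the semi-commutative relation together with the requirement, embedded in the persistent-set definition, that transitions in $T$ semi-commute with all $t_i$ along \emph{every} $T$-avoiding derivation — not merely the ones in a single fixed trace. I must ensure that the swapping schedule only ever invokes instances $\langle s, t_p \uparrow t_i, \safety\rangle$ that are genuinely guaranteed (always moving a $T$-transition over a non-$T$-transition in the licensed direction), and that the side conditions ``$t_1,t_2$ cannot de-schedule each other'' and the disjointness of $Rng(\theta), Rng(w_1), Rng(w_2)$ are met at each swap so that both orderings remain genuine traces of the program. Verifying that these hypotheses are stable under the repeated permutation is the delicate bookkeeping step; the rest is a routine adaptation of the classical persistent-set soundness proof with Mazurkiewicz equivalence replaced throughout by the weaker, one-directional trace-coverage relation $\sqsupseteq_{\safety}$.
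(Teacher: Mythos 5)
Your overall strategy is the classical persistent-set soundness argument transplanted to the semi-commutative/coverage setting, which is exactly the heart of the paper's proof: both hinge on bubbling a persistent-set transition to the front of a $T$-avoiding suffix by repeated adjacent swaps, each licensed by semi-commutativity, and chaining the resulting coverage relations. The packaging differs: you run a direct induction producing, for every complete trace, an explored trace that covers it, whereas the paper argues by contradiction, picking a violating unexplored trace $\rho_{max}$ maximizing the index $first(\rho)$ of the first non-persistent choice and showing the permuted trace has a strictly larger $first$ value. These are two standard dressings of the same argument; neither buys anything substantive over the other, though your inductive version makes the role of transitivity of $\sqsupseteq_{\safety}$ and of prefix-preservation explicit, which the paper leaves implicit.

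Two details need repair. First, the direction of your swap inequality is written backwards: for $t\in T$ and $t_i\notin T$, the definition of persistent set gives $\langle s, t\uparrow t_i,\safety\rangle$, i.e.\ $\theta w_1\, t\, t_i\, w_2 \sqsupseteq_{\safety} \theta w_1\, t_i\, t\, w_2$ --- the trace with the $T$-transition \emph{earlier} covers the one with it later. Since the explored trace is the one with $t$ moved to the front, this is precisely the direction you need ($\rho'$ explored, $\rho' \sqsupseteq_{\safety} \rho$, so $\rho'\models\safety$ forces $\rho\models\safety$); as literally written, your inequality would only let you conclude that the unexplored trace covers the explored one, which is useless. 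You flagged the issue, but the proof does not go through until this is pinned down. Second, your lemma as stated is false in the ``$T$ never taken'' case: if no transition of $T$ occurs in the suffix, there is nothing to bubble forward. You must rule this case out, and the paper does so explicitly: because transitions in $T$ and the $T$-avoiding transitions cannot de-schedule each other, some member of $T$ remains schedulable at the end of any $T$-free derivation, contradicting that a complete trace ends in a terminal state. That use of the de-scheduling condition (to guarantee a $T$-transition \emph{exists} in the suffix, not merely to validate individual swaps) is missing from your outline and is needed before the induction can fire.
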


\begin{proof}[Outline]
Assume that there exist some traces which violate the property $\psi$ 
and are not examined by our \emph{selective search}.
Let denote the set of such traces as $\mathcal{W}_{violated}$.
For each trace $\rho =  s_0 \stackrel{t_1}{\rightarrow} s_1 \stackrel{t_2}{\rightarrow} s_2 
\cdots \stackrel{t_m}{\rightarrow} s_m$, $\rho \in \mathcal{W}_{violated}$, 
let $first(\rho)$ denote the smallest index $i$ such that $t_i$ 
is not in the persistent set of $s_{i-1}$. 
Without loss of generality, 
assume $\rho_{max} =  s_0 \stackrel{t_1}{\rightarrow} s_1 \stackrel{t_2}{\rightarrow} s_2 
\cdots \stackrel{t_m}{\rightarrow} s_m$ having the maximum $first(\rho)$.
Let $i = first(\rho_{max}) < m$.
As the ``commuter'' and ``commutee'' cannot ``de-schedule'' each other, 
in the set $\{t_{i+1} \cdots t_m\}$ there must be a transition 
which belongs to the persistent set of $s_{i-1}$ (otherwise, the must exist some transition 
that belongs to the persistent set of $s_{i-1}$ which is schedulable at $s_m$. 
Therefore $s_m$ is not a terminal state).
Let $j$ be the smallest index such that $t_j$ 
belongs to the persistent set of $s_{i-1}$. 
By definition, wrt. $\sqsupseteq_{\psi}$ and after $s_{i-1}$,  
$t_j$ \emph{semi-commutes} with $t_i, t_{i+1}, \cdots t_{j-1}$.
Also due to the definition of the ``semi-commutative'' relation 
we deduce that all the following traces 
(by making $t_j$ repeatedly commute backward):
\begin{center}
$\begin{aligned} 
\rho'_1 = t_1 t_2 \cdots t_{i-1} t_i & t_{i+1} \cdots t_j t_{j-1} t_{j+1} \cdots t_m \\
&\vdots \\ 
\rho'_{j-i-1} = t_1 t_2 \cdots t_{i-1} t_i & t_j t_{i+1} \cdots t_{j-1} t_{j+1} \cdots t_m \\
\rho'_{j-i} = t_1 t_2 \cdots t_{i-1} t_j & t_i t_{i+1} \cdots t_{j-1} t_{j+1} \cdots t_m \\
\end{aligned}$ 
\end{center}

\noindent
must violate the property $\psi$ too. 
However, $first(\rho'_{j-i}) >$ $first(\rho_{max})$. 
This contradicts the definition of $\rho_{max}$.
\qed
\end{proof}

\begin{figure}[bh!]
\centering
\includegraphics[width=0.9\textwidth]{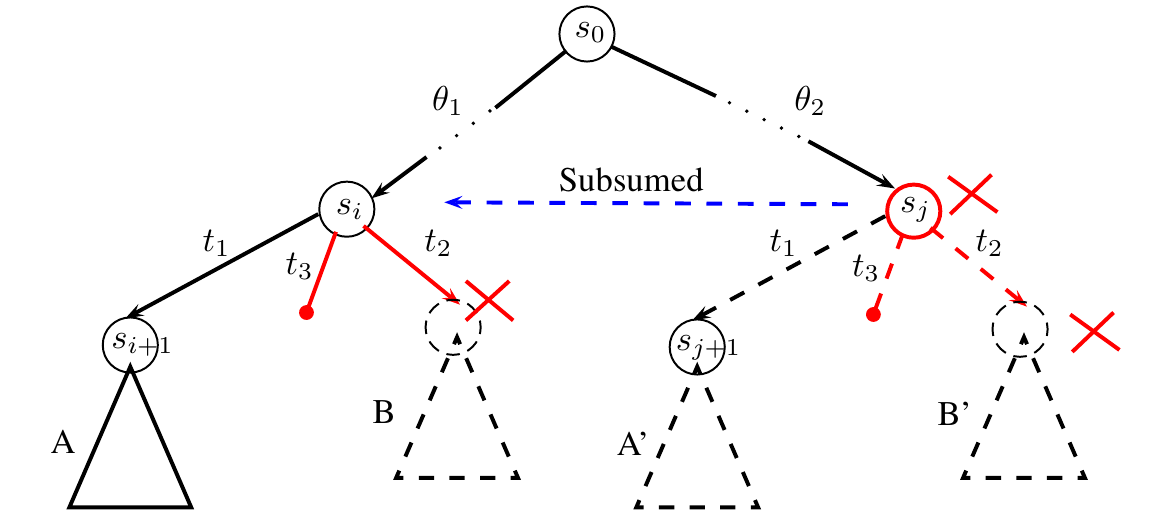}
\caption{Inductive Correctness}
\label{por:fig:proof}
\end{figure}

\begin{theorem}
The synergy algorithm in Fig.~\ref{por:algo:static} is sound.
\end{theorem}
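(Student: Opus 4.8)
The plan is to prove a stronger invariant by induction and then read off soundness as a corollary. Concretely, I would show: whenever a call $\func{Explore}(s)$ with $\symstate \equiv \pair{\pc}{\unknown}$ returns a formula $\Intpsymbol$ (i.e.\ no error is reported anywhere in its subtree), then $\Intpsymbol$ is a \emph{sound interpolant} for $\pc$ in the sense of Def.~\ref{def:interpolant:sound} --- every state $\pair{\pc}{\unknown}$ reachable from $\symstate_0$ that implies $\Intpsymbol$ is a safe root --- and moreover $\symstate \models \Intpsymbol$. Since $\func{Explore}$ traverses a finite tree (the $trans_i$ are acyclic, so every derivation is finite), I would run the induction over the \emph{finish time} of the recursive calls. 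This is the natural well-founded order that also accommodates memoization, because a subsumption hit at line~\ref{por:algo:base1} returns an interpolant produced by a strictly earlier-finished call, to which the induction hypothesis already applies; thus returning a memoed $\Intpsymbol$ is justified, and as $\symstate \models \Intpsymbol$ with $\Intpsymbol$ sound, $\symstate$ is a safe root. The error case at line~\ref{por:algo:base2} only fires on a feasible reachable state with $\symstate \not\models \safety$, a genuine violation, so the ``unsafe'' verdict never contradicts a later ``safe'' claim.

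First I would verify the running invariant $\symstate \models \Intpsymbol$. It starts from $\Intpsymbol \Assign \safety$, which holds because we passed line~\ref{por:algo:base2}; each conjunct added afterwards holds at $\symstate$ by the weakest-precondition property. For a disabled $t$ the guard fails at $\symstate$, so $\symstate \models \pre{t}{\false}$; for an enabled $t$ we have $\transition{\symstate}{\symstate'}{t}$ with $\symstate' \models \Intpsymbol'$ (induction hypothesis), hence $\symstate \models \pre{t}{\Intpsymbol'}$; and when the persistent-set branch is taken, the conjunction of $\Intpsymbol_{trace}$ at line~\ref{por:algo:combine_trace} is guarded exactly by $\symstate \models \Intpsymbol_{trace}$.

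For the inductive step I would take any state $\hat{\symstate} \equiv \pair{\pc}{\unknown}$ reachable from $\symstate_0$ with $\hat{\symstate} \models \Intpsymbol$ and show $\bigtriangleup_{\safety}(\hat{\symstate})$. Because schedulability depends only on $\pc$, we have $Schedulable(\hat{\symstate}) = Schedulable(\symstate)$, so $\func{Ts}$ is the same candidate set at $\hat{\symstate}$. For each enabled $t \in \func{Ts}$, $\hat{\symstate} \models \pre{t}{\Intpsymbol'}$ yields a successor $\hat{\symstate}'$ with $\hat{\symstate}' \models \Intpsymbol'$, which is a safe root by the induction hypothesis; for each disabled $t \in \func{Ts}$, $\hat{\symstate} \models \pre{t}{\false}$ forces the $t$-successor to be infeasible, hence safe. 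In the \emph{full} branch $\func{Ts} = Schedulable(\hat{\symstate})$, so every immediate successor of $\hat{\symstate}$ is a safe root or infeasible, and since $\hat{\symstate} \models \safety$, it follows directly that $\hat{\symstate}$ is a safe root.

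The delicate case, and the main obstacle, is the persistent-set branch, where $\func{Ts} = T \subsetneq Schedulable(\hat{\symstate})$: here I must show that restricting exploration to $T$ loses no safety violation. The key is that $\hat{\symstate} \models \Intpsymbol$ entails $\hat{\symstate} \models \Intpsymbol_{trace}$, so by the definition of \emph{Persistent Set Of A Program Point} the set $T$ is persistent at $\hat{\symstate}$, and its transitions semi-commute after $\hat{\symstate}$ wrt.\ $\sqsupseteq_{\psi}$ with every transition outside $T$ along any derivation. I would then replay the $\rho_{max}$/$first(\rho)$ argument used for the selective search (Fig.~\ref{por:algo:trace}), localized to the subtree rooted at $\hat{\symstate}$: any complete trace from $\hat{\symstate}$ must eventually fire some $t_j \in T$ (a $T$-transition cannot be de-scheduled by the non-$T$ prefix, so the trace cannot terminate without it), and repeated semi-commutation moves $t_j$ to the front, producing an \emph{explored} trace that covers the original wrt.\ $\sqsupseteq_{\psi}$. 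Since every explored trace from $\hat{\symstate}$ is safe (its first transition lies in $T$ and leads to a safe root or to infeasibility, by the previous paragraph), trace coverage propagates safety to every trace from $\hat{\symstate}$, giving $\bigtriangleup_{\safety}(\hat{\symstate})$ and hence soundness of $\Intpsymbol$. Applying the invariant at the root gives $\symstate_0 \models \Intpsymbol_0$ with $\Intpsymbol_0$ sound, so $\symstate_0$ is a safe root and the whole program is safe. The subtlety to get right is the composition of the two pruning rules: the interpolant must be sound for \emph{all} states at $\pc$ (not merely the explored $\hat{\symstate} = \symstate$) so that later subsumption hits remain justified, while the localized partial-order argument must be shown to interact correctly with this set-level interpolant guard $\Intpsymbol_{trace}$.
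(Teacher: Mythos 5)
Your proposal is correct and follows essentially the same route as the paper's own proof: a structural induction showing that the returned formula $\Intpsymbol$ --- the conjunction of the child contributions $\pre{t}{\Intpsymbol'}$, the disabled-transition contributions $\pre{t}{\false}$, and the persistent-set interpolant $\Intpsymbol_{trace}$ --- is a sound interpolant for $\pc$, with the pruned non-persistent branches justified by re-running the semi-commutation/trace-coverage argument of Theorem~\ref{por:theorem:trace} at any subsumed state satisfying $\Intpsymbol_{trace}$. You are somewhat more explicit than the paper (finish-time induction to cover memoization, the running invariant $\symstate \models \Intpsymbol$, and the general $n$-transition case rather than the paper's three-transition illustration), but the decomposition and the key lemma are the same.
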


\begin{proof}[Outline]
We use structural induction. 
Refer to Fig.~\ref{por:fig:proof}.
Assume that from $s_0$ we reach state $s_i \equiv \pair{\pc}{\unknown}$.
W.l.o.g., assume that 
at $s_i$ there are three transitions which are schedulable, namely $t_1, t_2, t_3$,  of which only
$t_1$ and $t_2$ are enabled. Also assume that under the interpolant $\Intpsymbol_2$,
the persistent set of $\pc$, and therefore of $s_i$, is just $\{t_1\}$. From the algorithm, we will extend
$s_i$ with $t_1$ (line~\ref{por:algo:extend}) and attempt to verify the subtree $A$ 
(line~\ref{por:algo:recursive}). 
Our induction hypothesis is 
that we have finished considering $A$, and indeed, it is safe under the interpolant $\Intpsymbol_A$.
That subtree will contribute $\Intpsymbol_1 = \pre{t_1}{\Intpsymbol_A}$ (line~\ref{por:algo:child}) 
to the interpolant of $s_i$.

Using the interpolant $\Intpsymbol_2$, the branch having transition $t_2$ 
followed by the subtree $B$
is pruned and that is safe, due to Theorem~\ref{por:theorem:trace}. 
Also, the disabled transition $t_3$ contributes
$\Intpsymbol_3$ (line~\ref{por:algo:disabled}) to the interpolant of $s_i$.
Now we need to prove that \textbf{$\Intpsymbol = \Intpsymbol_1 \wedge \Intpsymbol_2 \wedge \Intpsymbol_3$ is indeed a sound
interpolant for program point $\pc$}. 

Assume that subsequently in the search, 
we reach some state $s_j \equiv  \pair{\pc}{\unknown}$.
We will prove that $\Intpsymbol$ is a sound interpolant of $\pc$ by proving that if
$s_j \models \Intpsymbol$, then the pruning of $s_j$ is safe.

First, $s_j \models \Intpsymbol$ implies that $s_j \models \Intpsymbol_3$. 
Therefore, at $s_j$, $t_3$ is also disabled.
Second, assume that $t_1$ is enabled at $s_j$ and $\transition{s_j}{s_{j+1}}{t_1}$ 
(if not the pruning of $t_1$ followed by $A'$ is definitely safe).
Similarly, $s_j \models \Intpsymbol$ implies that 
$s_j \models \Intpsymbol_1$. 
Consequently, $s_{j+1} \models \Intpsymbol_A$ and
therefore the subtree $A'$ is safe too.
Lastly, $s_j \models \Intpsymbol$ implies that $s_j \models \Intpsymbol_2$. 
Thus the reasons which ensure that
the traces ending with subtree $A$ cover the traces ending with subtree $B$ also hold at $s_j$.
That is, the traces ending with subtree $A'$ 
also cover the traces ending with subtree $B'$.
\qed
\end{proof}


\end{document}